\theoremstyle{plain}
\newtheorem{theorem}{Theorem}[section]
\newtheorem{lemma}[theorem]{Lemma}
\newtheorem{corollary}[theorem]{Corollary}
\newtheorem{proposition}[theorem]{Proposition}
\theoremstyle{definition}
\newtheorem{definition}[theorem]{Definition}
\newcommand{\floor}[1]{\lfloor #1 \rfloor } 
\newcommand{\abs}[1]{\left| #1 \right|}
\newcommand{\measure}[0]{\mu}
\newcommand{\tuple}[1]{\langle #1 \rangle}
\newcommand{\limrun}[0]{\infty}
\newcommand{\even}[0]{\operatorname{even}}
\newcommand{\odd}[0]{\operatorname{odd}}
\newcommand{\trans}[1]{\mathchoice{\xrightarrow{#1}}{\xrightarrow{\smash{\lower1pt\hbox{$\scriptstyle #1$}}}}{\text{Error}}{\text{Error}}}
\newcommand{\alocc}[2]{|\!|#1|\!|_{#2}}
\newcommand{\occ}[2]{|#1|_{#2}}
\title{Finite-state independence}
\author{\begin{tabular}{lcr}
  Ver\'onica Becher & Olivier Carton& Pablo Ariel Heiber 
\end{tabular}}
\begin{document}

\maketitle

\begin{abstract}
  In this work we introduce a notion of independence based on finite-state
  automata: two infinite words are independent if no one helps to compress the
  other  using one-to-one finite-state transducers with auxiliary input.
  We prove that, as expected, the set of independent pairs of infinite
  words has Lebesgue measure $1$.  We show that the join of two independent
  normal words is normal.  However, the independence of two normal words is
  not guaranteed if we just require that their join is normal.  To prove
  this we construct a normal word $x_1x_2x_3\ldots$ where $x_{2n}=x_n$ for
  every~$n$. This construction has its own interest.
\end{abstract}

\section{Introduction}

In this work we introduce a notion of independence for pairs of infinite words, based 
on finite-state automata. We call it {\em finite-state independence}. 

The concept of independence appears in many mathematical theories,
formalizing that two elements are independent if they have no common parts.
In classical probability theory the notion of independence is defined for
random variables.  In the case of random variables with finite range the
notion of independence can be reformulated in terms of Shannon entropy
function: two random variables are independent if the entropy of the pair
is exactly the sum of the individual entropies.  An equivalent formulation
says that two random variables are independent if putting one as a
condition does not decrease the entropy of the other one.  In algorithmic
information theory the notion of independence can be defined for finite
objects using program-size (Kolmogorov--Chaitin) complexity. Namely, finite
words $x$ and~$y$ are independent if the program-size complexity of the
pair $(x,y)$ is close to the sum of program-size complexities of~$x$
and~$y$.  Equivalently, up to a small error term, two finite words are
independent if the program-size complexity of one does not decrease when we
allow the other as an oracle.  

Algorithmic information theory also defines
the notion of independence for random infinite words, as follows.  Recall
that, according to Martin-L\"of's definition, an infinite word is random if
it does not belong to any effectively null set; an equivalent
characterization establishes that an infinite word is random if its
prefixes have nearly maximal program-size complexity, which means that they
are incompressible with Turing machines.  Two random infinite words
$x_1x_2\ldots$ and $y_1y_2\ldots$ are independent if their join
$x_1y_1x_2y_2\ldots$ is random~\cite{vanLambalgen87,BST2016}, see
also  \cite[Theorem~3.4.6]{Nies08}  and \cite{Kautz1991} for  independence
 on stronger notions of randomness.  An equivalent definition establishes
that two random infinite words are independent if the program-size
complexity of the initial segment of one, conditioned on the other one, is
nearly maximal.  This means that one word remains incompressible even when
using the other one as an oracle.
See~\cite{Vitanyi2008,DowHir2010,SUV} for a thorough presentation of this
material.  While the notion of independence for random infinite words is
well understood, algorithmic information theory has not provided a fully
satisfactory definition of independence for arbitrary infinite words, see
the discussion in~\cite{CaludeZimand10}.

Here we scale down the notion of independence given by algorithmic
information theory by considering incompressibility by finite-state
automata instead of incompressibility by Turing machines.  Our
definition builds on the theory of finite-state compression ratio
introduced by Dai, Lathrop, Lutz and Mayordomo~\cite{Dai04}.  The
finite-state compression ratio of an infinite word indicates how much it
can be compressed by one-to-one finite-state transducers, which are
finite-state automata augmented with an output transition function such
that the automata input--output behaviour is one-to-one
(Huffman~\cite{Huffman52} called them lossless finite-state compressors).
The infinite words that can not be compressed 
are exactly the Borel normal words (this result was first known from
combining~\cite{Schnorr71} and~\cite{Dai04}, see~\cite{BecherHeiber13} for
a direct proof).  We say that two infinite words are finite-state
independent if one does not help to compress the other  using
finite-state transducers.  In Theorem~\ref{thm:measureI} we show that the
set of finite-state independent pairs of infinite words has Lebesgue
measure~$1$, giving an elementary proof.

As expected, the join of two finite-state independent normal words is
normal (Theorem~\ref{thm:join}).  However, independence of two normal words
is not guaranteed if we just require that their join is normal.  To show
this we construct a normal word $x$ that is equal to the word $\even(x)$
that consists of the symbols of $x$ at even positions
(Theorem~\ref{thm:n=2n}).  Thus, if $\odd(x)$ consists of the symbols of
$x$ at odd positions, both $\odd(x)$ and $\even(x)$ are normal, and their
join is normal. But $\odd(x)$ and $\even(x)$ are not independent: $\odd(x)$
equals $\odd(\even(x))$.
This phenomenon is not isolated:  Alexander Shen (personal communication, August 2016) 
proved that  for the  set of words $x$ such that    $x=\even(x)$,
a word is normal with  probability  $1$.

The notion of finite-state independence we present here is based just on
deterministic finite-state transducers.  It remains to investigate if
non-deterministic finite-state transducers operating with an oracle can
achieve different compression ratios.  In the case of finite-state
transducers with no oracle, it is already known that the deterministic and
the non-deterministic models compress exactly the same words, namely, the
non-normal words~\cite{BecherCartonHeiber15}.  Some other models also
compress exactly the same words, such as the finite-state transducers with
a single counter~\cite{BecherCartonHeiber15} and the two-way
transducers~\cite{CartonHeiber15}.  It is still unknown if there is a
deterministic push-down transducer that can compress some normal words.

It also remains the question of how  to characterize  finite-state independence
other than by the conditional  compression ratio in finite-state automata.
One would  like  a characterization in terms of  a complexity function based on finite automata
as those considered in  \cite{Shallit2001} and \cite{HKH2014}.

\section{Primary definitions}

Let $A$ be a finite set of symbols,   the alphabet.  
We write $A^\omega$ for the set of all infinite words over $A$
and $A^k$ stands for the set of all words of length~$k$. 
The length of a finite word $w$ is denoted by $|w|$.  The positions in finite and infinite words are numbered starting from~$1$.
To denote the symbol at position~$i$ of a word $w$ we write $w[i]$ and to denote the subword of 
$w$ from position~$i$ to~$j$ we write $w[i..j]$.
We  use the customary notation for asymptotic growth of functions saying that
 $f(n)$ is in $O(g(n))$  if  $\exists k>0 \ \exists n_0 \ \forall n>n_0 $, $ |f(n)| \leq k |g(n)|$.

\subsection{Normality}

A presentation of the definitions and basic results on normal sequences can be read form \cite{BC2017}.
Here we start by introducing the number of {\em occurrences} and the number of {\em aligned occurrences} of
a word~$u$ in a word~$w$. 
\begin{definition}
  For two words  $w$ and $u$, the number  of \emph{occurrences}
  of~$u$ in~$w$, denoted by $\occ{w}{u}$,  and the number  of \emph{aligned occurrences}
  of~$u$ in~$w$, denoted by $\alocc{w}{u}$, are defined as
  \begin{align*}
    \occ{w}{u}  & =|\{ i : w[i..i+|u|-1] = u \}|, \\ 
    \alocc{w}{u} & =|\{ i : w[i..i+|u|-1] = u \text{ and } i \equiv 1 \bmod |u|\}|.
  \end{align*}
\end{definition}
For example, $\occ{aaaaa}{aa}=4$ and $\alocc{aaaaa}{aa}=2$. 
Aligned occurrences are obtained 
by cutting $w$ in $|u|$-sized pieces starting from the left. Notice that the definition of aligned occurrences 
has the condition $i \equiv 1 \bmod |u|$ (and not $i \equiv 0 \bmod |u|$), 
because the positions are numbered starting from~$1$.  
Of course, when a word $u$ is just a symbol, $\occ{w}{u}$ and $\alocc{w}{u}$ coincide. 
Aligned occurrences can be seen as  symbol occurrences using a power alphabet: 
if~$w$ is a word whose  length is a  multiple of~$r$, then $w$ can be considered as a word $\pi(w)$ over~$A^r$ by grouping 
its symbols into blocks of length~$r$. 
The aligned occurrences of a word~$u$ of length~$r$ in 
$w$ then correspond to the occurrences of the symbol $\pi(u)$ 
in the word~$\pi(w)$, and $\alocc{w}{u} = \occ{\pi(w)}{\pi(u)}$.

We recall the definition of Borel normality~\cite{Borel1909} for infinite words 
(see the books~\cite{Bugeaud12,KuiNie74} for a complete presentation). 
An infinite word $x$ is \emph{simply normal} to word length~$\ell$ if all the blocks of length $\ell$ 
have asymptotically the same frequency of aligned occurrences, i.e., if for every $u \in A^\ell$,
\begin{displaymath}
  \lim_{n \to \infty} \frac{\alocc{x[1..n]}{u}}{n/\ell} = |A|^{-\ell}.
\end{displaymath}
An infinite word $x$ is \emph{normal} if it is simply normal to every length.  
Normality is defined here in terms of aligned occurrences but it could also 
be defined in terms of all occurrences.  The equivalence between the two
definitions requires a proof (see Theorem 4.5 in~\cite{Bugeaud12}).

\subsection{Automata}

We  consider \emph{$k$-tape automata},  also known as
\emph{$k$-tape transducers} when $k$ is greater than~$1$ \cite{PerrinPin04,Sakarovitch09}. 
We  call them  $k$-automata and we consider them  for $k$ equal to $1$, $2$, or $3$.  
A $k$-automaton is a tuple
${\mathcal T }= \tuple{Q,A,\delta,I}$, where $Q$ is the finite state set, $A$ is the
alphabet, $\delta$ is the transition relation, $I$ is the set of initial
states.  The transition relation is a subset of
$Q \times (A \cup \{\varepsilon\})^k \times Q$.  A transition is thus a
tuple $\tuple{p,\alpha_1,\ldots,\alpha_k,q}$ where $p$ is its starting state,
$\tuple{\alpha_1,\ldots,\alpha_k}$ is its \emph{label} and $q$ is its ending
  state.  Note that each $\alpha_i$ is here either a symbol of the alphabet or
the empty word.  A~transition is written as $p \trans{\alpha_1,\ldots,\alpha_k} q$.  As
usual, two transitions are called \emph{consecutive} if the ending state of the
first is the starting state of the second.

\begin{figure}[htbp]
  \begin{center}
    \begin{tikzpicture}[scale=1]
    \node (state) at (-1,2.5) [shape=rectangle,draw,rounded corners=1mm,
                            inner sep=10] {$Q$};
    \begin{scope}
      \draw (0.5,3.5)  -- (4.2,3.5);
      \draw[dotted] (4.2,3.5) -- (4.5,3.5);
      \draw (0.5,4)    -- (4.3,4);
      \draw[dotted] (4.3,4) -- (4.5,4);
      \foreach \x in {0.5,1,1.5,2,2.5,3,3.5,4}
        \draw (\x,3.5) -- (\x,4);
      \draw[very thick] (2,3.5) rectangle (2.5,4);
      \draw[->,>=latex,rounded corners] (state) -| ++(1,0.75) -| (2.25,3.5);
      \foreach \x/\xtext in {0.75/1,1.25/2,1.75/3,2.25/4,2.75/5,3.25/6,3.75/7}
        \node at (\x,3.7) {$a_{\xtext}$};
    \end{scope}
    \begin{scope}[shift={(0,-1)}]
      \draw (0.5,3.5)  -- (4.2,3.5);
      \draw[dotted] (4.2,3.5) -- (4.5,3.5);
      \draw (0.5,4)    -- (4.3,4);
      \draw[dotted] (4.3,4) -- (4.5,4);
      \foreach \x in {0.5,1,1.5,2,2.5,3,3.5,4}
        \draw (\x,3.5) -- (\x,4);
      \draw[very thick] (1,3.5) rectangle (1.5,4);
      \draw[->,>=latex,rounded corners] (state) -| ++(1,-0.25) -| (1.25,3.5);
      \foreach \x/\xtext in {0.75/1,1.25/2,1.75/3,2.25/4,2.75/5,3.25/6,3.75/7}
        \node at (\x,3.7) {$b_{\xtext}$};
    \end{scope}
    \begin{scope}[shift={(0.5,1)}]
      \draw (0,0.5)  -- (4.3,0.5);
      \draw[dotted] (4.3,0.5) -- (4.5,0.5);
      \draw (0,0)    -- (4.2,0);
      \draw[dotted] (4.2,0) -- (4.5,0);
      \foreach \x in {0,0.5,1,1.5,2,2.5,3,3.5,4}
        \draw (\x,0) -- (\x,0.5);
      \draw[very thick] (3,0) rectangle (3.5,0.5);
      \draw[->,>=latex,rounded corners] (state) -| ++(1,-0.75) -| (3.25,0.5);
      \foreach \x/\xtext in {0.25/1,0.75/2,1.25/3,1.75/4,2.25/5,2.75/6,3.25/7,3.75/8}
        \node at (\x,0.2) {$c_{\xtext}$};
    \end{scope}
    \end{tikzpicture}
  \end{center}
  \caption{A $3$-automaton and its tapes}.
  \label{fig:principle}
\end{figure}
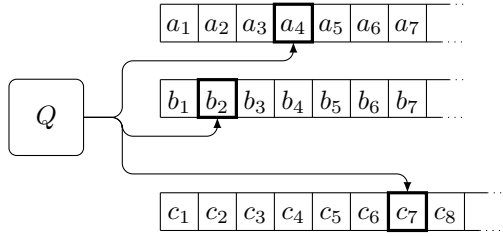

An infinite run  is an infinite sequence of consecutive transitions
\begin{displaymath}
  q_0 \trans{\alpha_{1,1},\ldots,\alpha_{k,1}} q_1 
      \trans{\alpha_{1,2},\ldots,\alpha_{k,2}} q_2 
      \trans{\alpha_{1,3},\ldots,\alpha_{k,3}} q_3 \trans{}\cdots 
\end{displaymath}
The label of the run is the component-wise concatenation of the
labels of the transitions given by  the tuple
$\tuple{x_1,\ldots,x_k}$ where each $x_j$ for $1 \le j \le k$ is equal to
$\alpha_{j,1}\alpha_{j,2}\alpha_{j,3}\cdots$.  Note that some label $x_j$ might be finite
although the run is infinite since some transitions may have empty
labels. The run is accepting if its first state~$q_0$ is initial and each
word~$x_j$ is infinite.  Such an accepting run is written shortly as
$q_0 \trans{x_1,\ldots,x_k}  \limrun$.
The tuple $\tuple{x_1,\ldots,x_k}$ is accepted if there
exists at least one accepting run with label $\tuple{x_1,\ldots,x_k}$.
The $1$-automata are the usual automata with $\varepsilon$-transitions
and the $2$-automata are the usual automata with input and output also
known as transducers.

In this work we consider only deterministic $k$-automata.  We actually
consider $k$-automata where the transition is determined by a subset of the
$k$ tapes.  Informally, a $k$-automaton is $\ell$-deterministic, for
$1\leq \ell\leq k$, if the run is entirely determined by the contents of
the first $\ell$ tapes.  More precisely, a $k$-automaton is
$\ell$-deterministic if the following two conditions are fulfilled,
\begin{itemize} \itemsep0cm
\item[-] the set $I$ of initial states is a singleton set;
\item[-] for any two transitions $p \trans{\alpha_1,\ldots,\alpha_k} q$ and 
  $p' \trans{\alpha'_1,\ldots,\alpha'_k} q'$ with $p = p'$, 

\qquad  if $\alpha_j = \varepsilon$ for some $1 \le j \le \ell$, then $\alpha'_j =
    \varepsilon$

\qquad if $\alpha_1 = \alpha'_1, \ldots, \alpha_{\ell} = \alpha'_{\ell}$, then 
    $\alpha_{\ell+1} = \alpha'_{\ell+1}, \ldots,\alpha_n = \alpha'_n$ and $q = q'$.
\end{itemize}
The conditions on the transitions leaving a state~$p$ are the following.
The first one requires that, among the first $\ell$ components, the ones
with empty label are the same for all transitions leaving~$p$.  This means
that each state determines (among the first $\ell$ tapes) the tapes from
which a symbol is read (the ones with a symbol as label) and the tapes from
which no symbol is read (the ones with empty label).  The second condition
is the usual one stating that two transitions leaving~$p$ and with the same
labels in the first $\ell$ components must be the same.

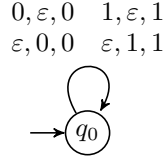
\begin{figure}[htbp]
  \begin{center}
    \begin{tikzpicture}[->,>=stealth',initial text=,semithick,auto,inner sep=2pt]
      \tikzstyle{every state}=[minimum size=0.4]
      \node[state,initial left] (q0) at (0,0)  {$q_0$};
      \path (q0) edge[out=120,in=60,loop] node {$\begin{array}{c} 
                                       0,\varepsilon,0 \quad 1,\varepsilon,1 \\
                                       \varepsilon,0,0 \quad \varepsilon,1,1
                                                \end{array}$} ();
    \end{tikzpicture}
  \end{center}
  \caption{A non-deterministic $3$-automaton for the shuffle.}
  \label{fig:shuffle}
\end{figure}

\begin{figure}[htbp]
  \begin{center}
    \begin{tikzpicture}[->,>=stealth',initial text=,semithick,auto,inner sep=2pt]
      \tikzstyle{every state}=[minimum size=0.4]
      \node[state,initial above] (q0) at (0,0)  {$q_0$};
      \node[state]  (q1) at (3,0) {$q_1$};
      \path (q0) edge[bend left=20] node {$\begin{array}{c} 
                                             0,\varepsilon,0 \\
                                             1,\varepsilon,1
                                          \end{array}$} (q1);
      \path (q1) edge[bend left=20] node {$\begin{array}{c} 
                                             \varepsilon,0,0 \\
                                             \varepsilon,1,1
                                          \end{array}$} (q0);
    \end{tikzpicture}
  \end{center}
  \caption{A $2$-deterministic $3$-automaton for the join.}
  \label{fig:join}
\end{figure}
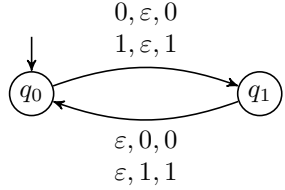

Figures~\ref{fig:shuffle} and~\ref{fig:join}
show  $3$-automata that accept a triple
$\tuple{x,y,z}$ of infinite words over the alphabet $\{ 0, 1 \}$.
In   Figure~\ref{fig:shuffle} the tuple
 $\tuple{x,y,z}$ is accepted if  $z$ is a shuffle  of $x$ and~$y$.
In  Figure~\ref{fig:join} the tuple
 $\tuple{x,y,z}$ is accepted 
if  $z$ is the join of $x$ and~$y$  (the join of two infinite words
$x = a_1a_2a_3\cdots$ and $y = b_1b_2b_3\cdots$ is the infinite word
$z = a_1b_1a_2b_2a_3\cdots$.)
The $3$-automaton in Figure~\ref{fig:join} is $2$-deterministic
but the one in Figure~\ref{fig:shuffle} is not
because the two transitions $q_0 \trans{0,\varepsilon,0} q_0$ and
$q_0 \trans{\varepsilon,0,0} q_0$ violate the required condition.

We assume each $\ell$-deterministic $k$-automata ${\mathcal T}$
computes a partial  function $(A^\omega)^\ell\to (A^\omega)^{k-\ell}$.  
Let $\mathcal{T}$ be  an $\ell$-deterministic  $k$-automaton.
For each tuple $\tuple{x_1,\ldots,x_{\ell}}$ of infinite words there
exists at most one tuple $\tuple{y_{\ell+1},\ldots,y_k}$ of infinite words such
that the $k$-tuple \linebreak
$\tuple{x_1,\ldots,x_{\ell},y_{\ell+1},\ldots,y_k}$ is
accepted by~$\mathcal{T}$.  The automaton~$\mathcal{T}$ realizes then a 
partial function from $(A^\omega)^{\ell}$ to $(A^\omega)^{k-\ell}$ and the tuple
$\tuple{y_{\ell+1},\ldots,y_k}$ is denoted $\mathcal{T}(x_1,\ldots,x_{\ell})$.
The $1$-deterministic $2$-automata are also called \emph{sequential transducers}
in the literature.  When a $k$-automaton is $\ell$-deterministic, 
we write the transition as  $p \trans{\alpha_1,\ldots,\alpha_{\ell}|\beta_{\ell+1},\ldots,
  \beta_k} q$ to emphasize the fact that the first $\ell$ tapes are input
tapes and that the $k-\ell$ remaining ones are output tapes.

We comment here on transitions reading no symbol from the input tapes.  Let
$\mathcal{A}$ be a $\ell$-deterministic $k$-automaton.  Suppose that there
exists a transition from state~$p$ to state~$q$ whose label has the form
$\tuple{\varepsilon,\ldots,\varepsilon,\beta_{\ell+1},\ldots, \beta_k}$
with empty labels in the first $\ell$ components.  Let us call these such a
transition an $\varepsilon^\ell$-transition.  We claim that it is always
possible to get rid of~$\varepsilon^\ell$-transitions. The
automaton~$\mathcal{A}$ being deterministic implies that this transition is
the only transition leaving state~$p$.  If there exists a cycle made
of~$\varepsilon^\ell$-transitions, this cycle is a dead end in the
automaton and its states can be removed without changing the set of
accepting runs of~$\mathcal{A}$.  Let us recall that it is required that
all labels of an accepting state to be be infinite.  Assume now that there
is no cycle of such $\varepsilon^\ell$-transitions.  Removing the
transition
$p \trans{\varepsilon,\ldots,\varepsilon|\beta_{\ell+1},\ldots, \beta_k} q$
and adding a transition
$p \trans{\alpha_1,\ldots,\alpha_\ell|\beta_{\ell+1}\gamma_{\ell+1},\ldots,
  \beta_k\gamma_k} r$ for each transition
$q \trans{\alpha_1,\ldots,\alpha_\ell|\gamma_{\ell+1},\ldots, \gamma_k} r$
leaving~$q$ preserve the accepting paths and decrease the number of
$\varepsilon^\ell$-transitions.  Completing the process until no
$\varepsilon^\ell$-transition remains remove all of them.  In the rest of
the paper, we always assume that $\varepsilon^\ell$-transitions have been
removed.

Let $\mathcal{T}$ be a $1$-deterministic $2$-automaton.  To define the compression
ratio of an infinite word~$x=a_1 a_2 \ldots $ by $\mathcal T$, denoted by $\rho_{\mathcal{T}}(x) $,
    consider the unique accepting run 
\[
q_0 \trans{a_1|v_1} q_1 \trans{a_2|v_2} q_2 \trans{a_3|v_3} q_3 \cdots
\]
 of~$\mathcal{T}$,
where each $a_i$ is a symbol in $A$, and each $v_i$ is a finite word (possibly empty) of symbols in $A$.
Then, 
\begin{displaymath}
   \rho_{\mathcal{T}}(x) = 
         \liminf_{n \to \infty} \frac{|v_1v_2\cdots v_n|}{|a_1a_2\cdots a_n|}.
\end{displaymath} 
For a given infinite word  $x$ it may happen that for   some automata $\mathcal{T}$, $   \rho_{\mathcal{T}}(x)$ is greater than~$1$ and
for some other  automaton $\mathcal{T}'$,   $\rho_{\mathcal{T}'}(x)$ is less than $1$.
We say that an infinite word $x$ is \emph{compressible } by a $1$-deterministic $2$-automaton~$\mathcal{T}$ if
$\rho_{\mathcal{T}}(x) < 1$.  
A $1$-deterministic $2$-automaton~$\mathcal{T}$ is called \emph{one-to-one} if the  function which maps
$x$ to~$\mathcal{T}(x)$ is one-to-one. 
The \emph{compression ratio} of an infinite word~$x$ is the infimum of the compression ratios achievable by all
one-to-one $1$-deterministic $2$-automata:
\begin{displaymath}
  \rho(x) = \inf\{\rho_{\mathcal{T}}(x) :  
   \text{$\mathcal{T}$ is a one-to-one $1$-deterministic $2$-automaton}\}
\end{displaymath}
This compression ratio~$\rho(x)$ is always less or equal to~$1$, as
witnessed by the one-to-one compressor~$\mathcal{C}_0$ which copies each
symbol of the input to the output.  For each infinite word~$x$, the
compression ratio~$\rho_{\mathcal{C}_0}(x)$ is equal to~$1$.

The sequence $x = 0^{\omega}$, made entirely of zeros, has compression
ratio $\rho(x) = 0$.  This is because for each positive real
number~$\varepsilon$, there exists a compressor~$\mathcal{C}$ such that
$\rho_{\mathcal{C}}(x) < \varepsilon$. However, 
the compression ratio~$0$ is not achievable by any one-to-one $1$-deterministic
$2$-automaton $\mathcal{C}$ for the following reason.
Every such automaton $\mathcal{C}$ computes a function  $A^\omega\to A^\omega$,
and the compression ratio by~$\mathcal{C}$ is the ratio between the output
and the input in the cycle reached by the infinite run.  
If the input word $x$ is ultimately periodic as for $x = 0^\omega$ 
and $x$ is in the domain of ${\cal C}$
then the run on~$x$ is also ultimately periodic, 
and the compression ratio of $x$ by~$\mathcal{C}$ is non-zero.  
On the other extreme, the words with compression ratio equal to~$1$ are exactly the normal words.

\section{Finite-state independence}

To define the notion of finite-state independence we  introduce 
 the  notion of conditional compression  ratio.
The  conditional compression  ratio of an
infinite word~$x$ with respect to another infinite word~$y$ is the ratio of
compression of~$x$ when $y$ is used as an oracle. 
To define this notion, we consider
 $2$-deterministic $3$-automata such that two input tapes
contain the words $x$ and~$y$ and the output tape contains the result of
the compression of~$x$ with the help of~$y$.

  A \emph{compressor} is a $2$-deterministic $3$-automata~$\mathcal{C}$
  such that for any fixed infinite word~$y$, the  function
  $x \mapsto \mathcal{C}(x,y)$ which maps $x$ to the
  output~$\mathcal{C}(x,y)$ is one-to-one.
This guarantees that, if $y$ is known, $x$ can be recovered from
$\mathcal{C}(x,y)$.  Note that we do \emph{not} require 
that the function $(x,y) \mapsto \mathcal{C}(x,y)$ is one-to-one, 
which  would be a much stronger requirement.

\begin{definition}\label{def:rho}
Let $\mathcal{C}$ be a compressor.  The \emph{conditional compression
  ratio} of an infinite word~$x$ with respect to~$y$ for~$\mathcal{C}$ is
determined by the unique accepting run
\begin{displaymath}
  q_0 \trans{\alpha_1,\beta_1|w_1} q_1 
      \trans{\alpha_2,\beta_2|w_2} q_2 
      \trans{\alpha_3,\beta_3|w_3} q_3 \cdots
\end{displaymath}
such that $x = \alpha_1\alpha_2\alpha_3\cdots$ and $y = \beta_1 \beta_2 \beta_3\ldots$, 
with each $\alpha_i, \beta_i\in (A\cup\epsilon)$, as
\begin{displaymath}
   \rho_{\mathcal{C}}(x/y) = 
         \liminf_{n \to \infty} \frac{|w_1w_2\cdots w_n|}{|\alpha_1 \alpha_2\cdots \alpha_n|}.
\end{displaymath} 
Notice that the number of symbols read from~$y$, the length of $\beta_1 \beta_2\cdots \beta_n$,
is \emph{not} taken into account when defining~$\rho_{\mathcal{C}}(x/y)$.

The \emph{conditional compression ratio} of an infinite word $x$ given an infinite word~$y$,
denoted by $\rho(x/y)$, is the infimum of the compression ratios
$\rho_{\mathcal{C}}(x/y)$ of all compressors~$\mathcal{C}$ with input~$x$ and oracle~$y$.
\end{definition}

Notice that the plain compression ratio $\rho(x)$ and the conditional compression ratio $\rho(x/y)$ 
always exist and they are values between $0$ and $1$ (witnessed by the identity function).

The following proposition gives sufficient conditions for the maximum compression, 
when the conditional compression ratio is equal to  zero.

\begin{proposition} \label{pro:function}
If a function~$f$
is realizable by a $1$-deterministic $2$-automata then, for every~$x$,
  $\rho(f(x)/x)=0$.
\end{proposition}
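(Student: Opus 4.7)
The plan is to show $\rho(f(x)/x)=0$ by exhibiting, for each integer $k\ge 1$, a compressor $\mathcal{C}_k$ with $\rho_{\mathcal{C}_k}(f(x)/x)\le 1/k$. Since $\rho(f(x)/x)$ is the infimum of such ratios, this yields $\rho(f(x)/x)\le 1/k$ for every $k$, hence $\rho(f(x)/x)=0$. We may assume $|A|\ge 2$, as otherwise the statement is trivial.

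The idea behind $\mathcal{C}_k$ is that the oracle already determines the input, so the compressor only needs to transmit the block-by-block disagreement between the true input and the prediction. Concretely, let $\mathcal{A}$ be the $1$-deterministic $2$-automaton realizing $f$. The compressor internally simulates $\mathcal{A}$ on the oracle $y$ to generate the prediction $f(y)$. It then processes the input $u$ in blocks of length $k$: for the $i$-th block, if $u[(i-1)k{+}1\,..\,ik]$ equals the predicted block, $\mathcal{C}_k$ emits the single symbol $0\in A$; otherwise it emits $1\in A$ followed by the $k$ literal symbols of the block.

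To implement this as a $2$-deterministic $3$-automaton without $\varepsilon^2$-transitions, I organize the computation into two alternating phases, using two bounded buffers. In Phase~A the compressor reads only from the oracle tape, each transition performing one step of $\mathcal{A}$ and appending its (possibly empty) output to a prediction buffer; it stays in Phase~A until the prediction buffer contains $k$ symbols. In Phase~B the compressor reads only from the input tape, appending each symbol read to an input buffer; on the $k$-th such read it compares the two $k$-blocks, emits the appropriate codeword (of length $1$ or $k+1$) on the output tape, pops $k$ symbols from each buffer, and returns to Phase~A. Each state fixes from which input tape the next symbol is read, so the $2$-deterministic pattern condition is satisfied, and since every transition reads from exactly one input tape, no $\varepsilon^2$-transition is introduced. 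The state set is finite, bounded in terms of $|Q_{\mathcal{A}}|$, $k$, $|A|$, and the maximum output length of an $\mathcal{A}$-transition.

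Two verifications finish the proof. For the one-to-one condition: with $y$ fixed, the output word parses unambiguously as a sequence of codewords, each starting with $0$ or $1$; using the prediction of $f(y)$ produced by $\mathcal{A}$, each codeword uniquely recovers the corresponding $k$-block of $u$, so $u\mapsto \mathcal{C}_k(u,y)$ is injective. For the ratio: when $u=f(x)$ every block matches, so after $m$ complete blocks the denominator of the ratio in Definition~\ref{def:rho} equals $mk$ while the numerator equals $m$; the ratio at these block boundaries is exactly $1/k$, bounding the $\liminf$ by $1/k$. I expect the main obstacle to be precisely the automaton-theoretic bookkeeping: synchronising the two buffers with finitely many states while respecting $2$-determinism and eliminating $\varepsilon^2$-transitions; the phase structure above is designed to solve exactly this issue and keeps everything else routine.
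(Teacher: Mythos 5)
Your proof is correct and follows essentially the same approach as the paper: for each $k$ you build a compressor that simulates the automaton realizing $f$ on the oracle tape and emits one symbol per matching $k$-block, yielding ratio $1/k$ and hence infimum $0$. The only (inessential) difference is that the paper, upon the first mismatch, writes an escape symbol and copies the rest of the input literally forever, whereas you escape and re-synchronize block by block; both variants are injective for each fixed oracle and agree on the input $f(x)$, which is all that matters.
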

\begin{proof}
Assume that the function~$f$ is realized by a $1$-deterministic
  $2$-automaton~$\mathcal{T}$, so
$f(x) = \mathcal{T}(x)$ for every infinite word $x$.
 We fix a positive integer~$k$ and we construct a
  compressor~$\mathcal{C}$ such that $\rho_{\mathcal C}(f(x)/x)=1/k$.
This compressor has two input tapes, the
  first one containing a word~$y$ and the second one the word~$x$.  It
  compresses $y$ in a non-trivial way only when $y$ is equal to~$f(x)$.
  The automaton~$\mathcal{C}$ proceeds as follows.  It reads the infinite
  word~$x$ from the second input tape and computes $f(x)$ by simulating the
  automaton~$\mathcal{T}$.  While $f(x)$ coincides with $y$ for the next
  $k$ symbols, then $\mathcal{C}$ writes a $0$.  When there is a mismatch,
  $\mathcal{C}$ writes a $1$ and then copies the remaining part of~$y$ to
  the output tape.  Thus, if the mismatch occurs at position
  $m = kp + r$ with $1 \leq r \leq k$,  the automaton $\mathcal{C}$ writes $p$ symbols~$0$
  before the mismatch, a symbol~$1$ and $y_{kp+1}y_{kp+2}y_{kp+3} \cdots$.
\end{proof}

The following proposition provides a sufficient condition for compressing
$x$ given~$y$: some correlation between symbols in $x$
and~$y$ at the same positions ensures $\rho(x/y) < 1$.

\begin{proposition} \label{pro:correl}
  Let $x$ and $y$ be two infinite words.  If there are three symbols $c$,
  $c'$ and~$d$ and an increasing sequence $(m_n)_{n\ge0}$ of integers such
  that
  \begin{displaymath}
    \lim_{n\to\infty}{\frac{|\{ 1 \le i \le m_n : x[i] = c, y[i] = d\}|}{m_n}} 
    \neq
    \lim_{n\to\infty}{\frac{|\{ 1 \le i \le m_n : x[i] = c', y[i] = d\}|}{m_n}},
  \end{displaymath}
  then $\rho(x/y) < 1$.
\end{proposition}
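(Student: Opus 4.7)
The plan is to exploit the asymmetry $P(c,d) \neq P(c',d)$ by restricting~$x$ to the positions where $y[i] = d$: the resulting subword has non-uniform single-symbol frequencies along a subsequence, and a classical block Huffman-type finite-state encoder compresses it. This compression then lifts to a compressor of~$x$ given~$y$ by copying the other positions verbatim.

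Passing to a sub-subsequence of $(m_n)$, I may assume every joint frequency $|\{i \le m_n : x[i]=a,\, y[i]=b\}|/m_n$ converges to some $P(a,b)$ with $P(c,d) \neq P(c',d)$ preserved. Set $P(d) := \sum_a P(a,d) > 0$ and $R(a) := P(a,d)/P(d)$; since $R(c) \neq R(c')$ the distribution~$R$ is non-uniform, so its $|A|$-ary entropy $H(R) := -\sum_a R(a)\log_{|A|} R(a)$ is strictly less than~$1$. Let~$z$ be~$x$ restricted to the positions where $y[i] = d$ and put $N_n := |\{i \le m_n : y[i]=d\}|$; the single-symbol frequency of~$a$ in $z[1..N_n]$ tends to $R(a)$. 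Fix a large~$L$ and a smoothing~$\tilde R$ of~$R$ with $\tilde R(a) > 0$ everywhere and cross-entropy $-\sum_a R(a) \log_{|A|} \tilde R(a) < 1$. By Kraft's inequality there is a prefix code $\psi: A^L \to A^*$ with $|\psi(u)| = \lceil -\sum_{i=1}^L \log_{|A|} \tilde R(u[i]) \rceil$. The $1$-deterministic $2$-automaton~$\mathcal{T}$ that reads~$z$ in blocks of length~$L$ and writes $\psi(u)$ for each block is a finite-state compressor of~$z$.

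Lift~$\mathcal{T}$ to a compressor~$\mathcal{C}$ of~$(x,y)$ by copying $x[i]$ to output when $y[i] \neq d$ and feeding $x[i]$ to the simulation of~$\mathcal{T}$ otherwise. Given~$y$, the $d$-positions are located unambiguously, and the output is decoded directly at non-$d$ positions and via the prefix code~$\psi$ at $d$-positions, so $\mathcal{C}$ is a valid compressor. The output length of~$\mathcal{C}$ after reading the first $m_n$ symbols of~$x$ is $(m_n - N_n) + |\mathcal{T}(z)[1..N_n]|$, so the ratio at~$m_n$ equals $1 - (N_n/m_n)\bigl(1 - |\mathcal{T}(z)[1..N_n]|/N_n\bigr)$. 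As $N_n/m_n \to P(d) > 0$ and the inner ratio tends to the cross-entropy $+\, O(1/L) < 1$, the expression tends to a limit strictly below~$1$, whence $\rho(x/y) \le \rho_{\mathcal{C}}(x/y) < 1$. The subtle point I expect to be the main work is justifying the compression rate of~$\mathcal{T}$ from only single-symbol frequency convergence: block frequencies in~$z$ need not converge along $(N_n)$, but because the chosen per-block code length is additive over the symbols of the block, only single-symbol convergence is needed.
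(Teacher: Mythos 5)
Your argument is, at its core, the same as the paper's: pass to a subsequence along which all joint frequencies converge, build a Kraft--Huffman prefix code on blocks whose code lengths are additive over the symbols of the block (so that only single-symbol statistics are needed --- exactly the reduction the paper performs when it collapses the sum over block pairs $(u,v)$ to a sum over symbol pairs $(a,b)$), and conclude because the conditional distribution of $x[i]$ given $y[i]=d$ is non-uniform, hence has base-$|A|$ entropy strictly below $1$. The only structural difference is packaging: the paper encodes synchronized $k$-blocks of $x$ conditioned on the corresponding $k$-blocks of $y$, whereas you physically extract the subword of $x$ supported on the $d$-positions of $y$ and compress that stream alone. That repackaging is where the one genuine defect lies.

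The lifted automaton $\mathcal{C}$ you describe is not a compressor in the sense of the paper's definition, which requires $x\mapsto\mathcal{C}(x,y)$ to be one-to-one for \emph{every} fixed oracle $y$, not only for the particular $y$ of the proposition. If $y$ contains finitely many occurrences of $d$ and their number is not a multiple of $L$, the last partial buffer of the simulated $\mathcal{T}$ is never flushed: the symbols of $x$ at those final $d$-positions never reach the output, so two words $x\neq x'$ differing only there satisfy $\mathcal{C}(x,y)=\mathcal{C}(x',y)$, while both runs are still accepting because the verbatim copies at the infinitely many non-$d$ positions keep the output infinite. The repair is routine but should be made explicit: for instance, read $x$ and $y$ in synchronized blocks of length $k$ and emit, for each pair $(u,v)$, a prefix codeword from a code adapted to the product distribution that uses your $\tilde R$ at the $d$-coordinates of $v$ and the uniform distribution elsewhere --- this flushes every symbol of $x$ and is precisely the paper's construction. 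Apart from this injectivity issue, your entropy computation, the choice of a smoothing $\tilde R$ with cross-entropy below $1$, and the observation that additive code lengths require only single-symbol frequency convergence are all correct and match the paper's proof.
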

We assume here that both limits exist; however,
the same result holds if just one or none of the two limits  exist.
\begin{proof}
  By replacing the sequence $(m_n)_{n\ge0}$ by some subsequence of it, 
  we may assume without loss of generality that 
  $\lim_{n\to\infty}{|\{ 1 \le i \le m_n : x[i] = a, y[i] = b\}|/m_n}$
  exists for arbitrary symbols $a$ and~$b$.  
  This limit is denoted by $\pi(a,b)$. The existence of all the limits $\pi(a,b)$ implies that 
   for each symbol~$b$,  
\[
\lim_{n\to\infty}{|\{ 1 \le i \le m_n : y[i] = b \}|/m_n}
\] also
  exists and 
  \[
 \lim_{n\to\infty}{\frac{|\{ 1 \le i \le m_n : y[i] = b \}|}{m_n}} =
    \sum_{a \in A} \pi(a,b).
\]
  This limit is denoted by $\pi_y(b)$.
  Define 
\[
\nu(a/b)=\left\{
\begin{array}{ll}
 \pi(a,b)/\pi_y(b)&\text{ if $\pi_y(b) \neq 0$ }
\\ 
1/|A|& otherwise.  
\end{array}
\right.
\]
  Let $k$ be a block length to be fixed later.  For two words
  $u = a_1 \cdots a_k$ and $v = b_1 \cdots b_k$ of length~$k$, define
\[
\nu(u/v) = \prod_{i=1}^k\nu(a_i/b_i).
\]
Let us recall that a set $P$ of words is prefix-free if no distinct words
$u,v \in P$ satisfy that $u$ is a prefix of $v$.  Note that if $P$ is
prefix-free, every word~$w \in A^*$ has at most one factorization
$w = u_1 \cdots u_n$ where each $u_i$ belongs to~$P$.  We will use the
following well-known fact due to Huffman \cite{Huffman52}.  

Let
$p_1,\ldots,p_n$ be real numbers such that $0 \le p_i \le 1$ for each
$1 \le i \le n$ and $\sum_{i=1}^{n}{p_i} = 1$, then there exist $n$
distinct words $u_1,\ldots,u_n$ such that the set $\{ u_1, \ldots, u_n \}$
is prefix-free and $|u_i| \le \lceil -\log p_i \rceil$ for $1 \le i \le n$.

It is purely routine to check that $\sum_{u \in A^k} \nu(u/v) = 1$ for each
word~$v$ of length~$k$.  Since $\sum_{u \in A^k} \nu(u/v) = 1$, there
exists for each word~$v$, a prefix-free set $\{ w(u,v) : u,v \in A^k\}$
such that the relation $|w(u,v)| \le \lceil -\log_{|A|} \nu(u/v)\rceil$
holds for each $u$ and~$v$.  These words $w(u,v)$ are used by the
transducer to encode the infinite word~$x$ with the help of~$y$.  The
transducer reads $x$ and~$y$ by blocks of length~$k$.  For each pair of
blocks $u$ and~$v$, it outputs $w(u,v)$.  This output can be decoded with
the help of~$y$ because for each fixed block~$v$, the possible blocks~$u$
are in one-to-one correspondence with the words~$w(u,v)$.

We now evaluate the length of the output of the transducer.  Let $p(n,u,v)$
be the number of occurrences of the pair $(u,v)$ in $x$ and $y$.  Then,
  \begin{displaymath}
    p(n,u,v) = |\{ 1 \le i \le n-k: i = 1\bmod k, x[i..i+k-1] = u, y[i..i+k-1] = v\}|
  \end{displaymath}
  \begin{align*}
    \rho(x/y) 
     & \le \lim_{n\to\infty} \frac{1}{n}       
           \sum_{u,v \in A^k} p(n,u,v)|w(u,v)| \\
     & \le \lim_{n\to\infty} \frac{1}{n} 
           \sum_{u,v \in A^k} p(n,u,v) \lceil -\log_{|A|} \nu(u/v)\rceil \\
     & \le \frac{1}{k} + \lim_{n\to\infty} 
            \frac{1}{n}\sum_{u,v \in A^k} p(n,u,v)(-\log_{|A|} \nu(u/v)) \\
     & = \frac{1}{k} + \lim_{n\to\infty} 
            \frac{1}{n}\sum_{\substack{
                              u = a_1 \cdots a_k \\
                              v = b_1 \cdots b_k}}
            p(n,u,v)(-\log_{|A|} \prod_{i=1}^k\nu(a_i/b_i))   \\ 
     & = \frac{1}{k} + \lim_{n\to\infty} 
            \frac{1}{n}\sum_{a,b\in A} |\{ 1 \le i \le n : x[i] = a, y[i]=b\}|
            \log_{|A|} \frac{1}{\nu(a/b)} \\
     & =  \frac{1}{k} + 
            \sum_{b\in A, \pi_y(b)\not=0} \pi_y(b) \sum_{a\in A} \frac{\pi(a,b)}{\pi_y(b)}
            \log_{|A|} \frac{\pi_y(b)}{\pi(a,b)} 
  \end{align*}
The previous to the last row results from 
counting correlated symbols instead of counting correlated blocks of length $k$.
The last  equality results  from using $\nu(a/b) =   \frac{\pi(a,b)}{\pi_y(b)}$ and    
applying the definition of $\pi(\cdot,\cdot)$ and $\pi_y(\cdot)$.
  
We have that for each symbol~$b$, the sum
\[  \sum_{a\in A} \frac{\pi(a,b)}{\pi_y(b)}\log_{|A|}
  \frac{\pi_y(b)}{\pi(a,b)} \leq 1.
\]
However, for $b = d$, the above sum  is strictly less than~$1$.  
Then, it  follows  that
  \begin{displaymath}
    \sum_{b\in A} \pi_y(b) \sum_{a\in A} \frac{\pi(a,b)}{\pi_y(b)}
            \log \frac{\pi_y(b)}{\pi(a,b)} < 1.
  \end{displaymath}
  If $k$ is chosen great enough, the conditional compression
  ratio~$\rho(x/y)$ satisfies $\rho(x/y) < 1$.  
\end{proof}

The definition of finite-state independence of two infinite words is based on the conditional compression
ratio.

\begin{definition}
  Two infinite words $x$ and~$y$ are \emph{finite-state independent} if
  $\rho(x/y) = \rho(x)$, \linebreak $\rho(y/x) = \rho(y)$ and the compression ratios
  of $x$ and~$y$ are non-zero.
\end{definition}  

Note that we require that the compression ratios of $x$ and~$y$ are
non-zero.  This means that a word~$x$ such that $\rho(x) = 0$ 
cannot be part of  an independent pair.
Without this requirement, every two words $x$ and~$y$
such that $\rho(x) = \rho(y) = 0$ would be independent.  In particular,
every word~$x$ with $\rho(x) = 0$ would be independent of itself.

\section{Join independence is not enough}

Recall that the \emph{join} of two infinite words
$x = a_1a_2a_3\cdots$ and $y = b_1b_2b_3 \cdots$ is the infinite word
$a_1b_1a_2b_2\cdots$ obtained by interleaving their symbols.  It is denoted
$x \vee y$.  A possible definition of independence for normal words $x$
and~$y$ would be to require their join $x \vee y$ to be normal.  We call
this notion \emph{join independence}.  This definition of independence
would be natural since it mimics the definition of independence in the
theory of algorithmic randomness, where two random infinite words are
independent if their join is random~\cite{vanLambalgen87}, see also
\cite[Theorem~3.4.6]{Nies08}.
One can ask whether a similar result holds for our definition.
Is it true that  two normal words are finite-state independent
 if and only if their join is normal?
It turns out that 
finite-state independence implies join independence. 
But the converse fails.

We use the following notation.  For a given infinite word
$x = a_1a_2a_3\ldots$, we define infinite words $\even(x)$ and $\odd(x)$ 
as  $a_2a_4a_6\cdots$ and $a_1a_3a_5\cdots$ respectively.  Similarly, for a
finite word~$w$, $\even(w)$ and $\odd(w)$ are the words
appearing on the even and the odd positions in~$w$. For example,
$x=\even(x)$ means that $a_n=a_{2n}$ for all $n$.

\begin{theorem} \label{thm:join}
  Let $x$ and $y$ be normal. 
If $x$ and $y$ are finite-state independent then $x \vee y$  is normal.
\end{theorem}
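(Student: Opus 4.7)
The plan is to prove the contrapositive: if $x$ and $y$ are normal but $x \vee y$ is not normal, then $x$ and $y$ are not finite-state independent. Since $x$ and $y$ are normal we have $\rho(x) = \rho(y) = 1$, so it suffices to exhibit a compressor $\mathcal{C}$ with $\rho_{\mathcal{C}}(x/y) < 1$, which already contradicts the equality $\rho(x/y) = \rho(x)$ required by independence.

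Non-normality of $x \vee y$ produces some length $\ell$ at which $x \vee y$ fails to be simply normal. I first observe that simple normality to any length $m$ implies simple normality to every divisor $d$ of $m$: each aligned $d$-block sits at one of $m/d$ possible offsets inside an aligned $m$-block, and summing the $|A|^{m-d}$ possible $m$-block extensions of a fixed $d$-block, each weighted by frequency $|A|^{-m}$, yields aligned $d$-block frequency $|A|^{-d}$. Hence failure of simple normality at $\ell$ entails failure at $2\ell$, and I may assume the bad length is even, $\ell = 2k$. Passing to a subsequence along which all aligned $2k$-block frequencies converge, I obtain $w \in A^{2k}$ and $(m_n)_{n \ge 0}$ with $\alocc{(x \vee y)[1..m_n]}{w}/(m_n/(2k))$ converging to some $\alpha \neq |A|^{-2k}$.

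All aligned positions $2jk+1$ of length-$2k$ blocks in $x \vee y$ are odd, so the aligned occurrence of $w$ at position $2jk+1$ is exactly the joint event $x[jk+1..jk+k] = u$ and $y[jk+1..jk+k] = v$ at aligned $k$-position $jk+1$ of $(x,y)$, where $u = w[1]w[3]\cdots w[2k-1]$ and $v = w[2]w[4]\cdots w[2k]$. Consequently the joint aligned $k$-block frequency $\pi(u,v)$ in $(x,y)$ satisfies $\pi(u,v) \neq |A|^{-2k}$ along $m_n$. Normality of $x$ and $y$ gives marginals $\pi_x(u) = \pi_y(v) = |A|^{-k}$, so the anomaly in $\pi(u,v)$ is genuine correlation: there exist $y$-blocks $v, v' \in A^k$ for which the conditional distributions on $A^k$ defined by $\nu(u \mid v) = \pi(u,v)/\pi_y(v)$ differ from each other.

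The compressor is then a block-level adaptation of the construction in the proof of Proposition~\ref{pro:correl}: read $x$ and $y$ in synchronized blocks of length $k$, and for each pair $(u,v)$ emit a codeword of length at most $\lceil -\log_{|A|} \nu(u \mid v)\rceil$, chosen so that for every fixed $v$ the codewords $\{c(u,v) : u \in A^k\}$ form a prefix-free set (so the output is decodable with help of $y$). The same entropy estimate that closes the proof of Proposition~\ref{pro:correl} then yields $\rho_{\mathcal{C}}(x/y) < 1$, contradicting independence. The main technical obstacle is the divisor reduction in the second paragraph that lets me assume $\ell$ is even; once that is in place, the remainder is a routine block-level upgrade of Proposition~\ref{pro:correl} from single-symbol correlations to $k$-block correlations.
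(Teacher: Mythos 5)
Your overall route is the paper's own: reduce the bad length for $x\vee y$ to an even one $2k$, translate the aligned $2k$-block anomaly in $x\vee y$ into a correlation between aligned $k$-blocks of $x$ and $y$, and compress $x$ given $y$ by a conditional prefix code. The first two stages are correct, and in places cleaner than the paper's (your divisor argument actually justifies the ``increase the length to make it even'' step, and using normality of $y$ to pin the marginal $\pi_y(v)=|A|^{-k}$ replaces the paper's case split in which the roles of $x$ and $y$ are exchanged).

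The gap is in the last step. You code each $k$-block of $x$ by a single codeword of length $\lceil -\log_{|A|}\nu(u\mid v)\rceil$, so the expected output per block is only bounded by $1+\bar H$, where $\bar H=\sum_{u,v}\pi(u,v)\log_{|A|}(1/\nu(u\mid v))$ is the conditional entropy. Non-normality of $x\vee y$ only guarantees $\bar H<k$; the deficit $k-\bar H$ is a fixed positive number depending on $x$ and $y$ and can be far smaller than $1$. In that case $1+\bar H\ge k$, the ceiling overhead swallows the entropy gain, and your compressor need not achieve ratio $<1$: for a conditional law close enough to uniform, the best integer-length prefix code on $A^k$ is the trivial one of constant length $k$. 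The ``same entropy estimate'' from Proposition~\ref{pro:correl} does not close this, because there the coding block length is a free parameter chosen large compared with the reciprocal of the per-symbol entropy deficit, whereas in your construction the block length is forced to equal $k$ by the non-normality witness, leaving no parameter to absorb the $+1$ from the ceiling. The repair is exactly what the paper does: view $u_1u_2\cdots$ and $v_1v_2\cdots$ as words over the power alphabet $A^k$ and apply Proposition~\ref{pro:correl} to them — equivalently, code $N$ consecutive $k$-blocks at once with $\nu$ extended multiplicatively — so that the rounding cost $1/(Nk)$ per input symbol falls below the fixed per-symbol deficit $(k-\bar H)/k$ once $N$ is large.
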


\begin{proof}
Suppose $x\vee y$ is not normal. 
Then, there is a length~$k$ such that a word of length~$k$ 
does not have aligned frequency $2^{-k}$  in~$x \vee y$.  
We can assume without loss of generality that the length 
is even (increasing it if necessary), so we assume that
 some word of length~$2k$ does not have aligned frequency~$2^{-2k}$.  
Split $x\vee y$ into blocks of size $2k$:
\[
x \vee y = (u_1\vee v_1)(u_2\vee v_2)\cdots
\]
where $x=u_1u_2\ldots$ and $y=v_1v_2\ldots$ are the respective 
decomposition in $k$-blocks. 
Consider a sequence of positions $(m_n)_{n\geq 1}$ in $x\vee y$ 
such that each block of length $2k$ has an aligned frequency,
and let   $u\vee v$ be a   block of length~$2k$ 
whose frequency  $f$ along $(m_n)_{n\geq 1}$ is not $2^{-2k}$.
Let  $u'\vee v$   be another  block of length~$2k$
with  frequency along $(m_n)_{n\geq 1}$ 
different from $f$.
(Notice that if  $u\vee v$ and all blocks $u'\vee v$ have the all 
the same  frequency $f$ along $(m_n)_{n\geq 1}$  then, necessarily,  
there is a block $v'$ such that  $u\vee v'$  has frequency 
along $(m_n)_{n\geq 1}$ different from $f$, so we  exchange the  roles of~$x$ and $y$).
Then,  at the positions $(m_n/2)_{n\geq 1}$
 \begin{align*}
 &   \lim_{n\to\infty}{\frac{|\{ 1 \le i \le m_n/2 : x[i..i+k-1] = u, y[i..+k-1] = v\}|}{m_n}} 
    \neq
\\
 &   \lim_{n\to\infty}{\frac{|\{ 1 \le i \le m_n/2 : x[i..+k-1] = u', y[i..i+ k -1] = v\}|}{m_n}},
  \end{align*}
By the argument in Proposition~\ref{pro:correl} 
we conclude that   $\rho(u_1u_2\ldots/v_1v_2\ldots)<1$. 
By considering a compressor that reads  blocks of length $k$ we obtain
\[
\rho(x/y)=\rho(u_1u_2\ldots/v_1v_2\ldots)<1.
\]
\end{proof}

Actually, the proof of Theorem~\ref{thm:join} gives  a stronger result.

\begin{proposition}
If $y$ is normal and $\rho(x/y)=1$, then $x\vee y$ is normal.
\end{proposition}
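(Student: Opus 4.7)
The plan is to mirror closely the proof of Theorem~\ref{thm:join}, using normality of $y$ to replace the symmetry argument (swapping the roles of $x$ and $y$) that is no longer available since we do not assume $x$ is normal.

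I would argue by contrapositive. Suppose $x \vee y$ is not normal; the goal is $\rho(x/y) < 1$. As in Theorem~\ref{thm:join}, pick an even length $2k$ and a block $u \vee v \in A^{2k}$ whose aligned frequency in $x \vee y$ is not $|A|^{-2k}$. By passing to a subsequence of aligned $2k$-block positions $(m_n)_{n \ge 1}$, assume every block of length $2k$ has a well-defined limit aligned frequency $\pi(\cdot,\cdot)$ along $(m_n)$. Write $f = \pi(u,v) \neq |A|^{-2k}$.

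The only new step, and the one place where the hypothesis on $y$ is used, is producing some $u' \in A^k$ with $\pi(u', v) \neq f$. The aligned $2k$-block positions in $x \vee y$ whose second half is $v$ correspond bijectively to aligned $k$-block positions in $y$ at which the $k$-block is $v$. Normality of $y$ forces the aligned $k$-block frequency of $v$ in $y$ to equal $|A|^{-k}$, and this limit is preserved along the subsequence coming from $(m_n)$. Hence
\[
\sum_{u' \in A^k} \pi(u', v) = |A|^{-k}.
\]
If all $|A|^k$ summands were equal to $f$, their sum would be $|A|^k f$, forcing $f = |A|^{-2k}$, contradicting the choice of $u \vee v$. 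So some $u'$ satisfies $\pi(u', v) \neq \pi(u, v)$.

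From here the argument of Theorem~\ref{thm:join} applies verbatim: the inequality at a common second block $v$ supplies the correlation required to invoke Proposition~\ref{pro:correl} over the power alphabet $A^k$ (at the positions $m_n/2$ in the aligned $k$-block decompositions of $x$ and $y$), producing a compressor that achieves $\rho(x/y) < 1$, the desired contradiction. The main conceptual obstacle is verifying that normality of $y$ really is strong enough to substitute for the symmetry used in the previous proof; the elementary sum above is exactly that verification, and no other part of the argument needs to change.
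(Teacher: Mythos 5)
Your proof is correct and follows exactly the route the paper intends when it says the proof of Theorem~\ref{thm:join} already gives this stronger statement: run that argument and note that the role-swapping case cannot occur. Your summation identity $\sum_{u'\in A^k}\pi(u',v)=|A|^{-k}$, which uses normality of $y$ to rule out all joint frequencies at a fixed $v$ being equal to $f\neq |A|^{-2k}$, is precisely the detail the paper leaves implicit, so this is the same proof, just written out.
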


We now show that   there are two words, that are join independent 
but not finite-state independent.  
The proof is based on the existence of 
normal word $x$ such that  $x = \even(x)$,
that we prove in Theorem~\ref{thm:n=2n} below.

\begin{theorem}  
There exists two normal words $x$ and $y$ such that $x \vee y$ is normal
  but $x$ and $y$ are not independent.
\end{theorem}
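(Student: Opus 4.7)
The plan is to extract $x$ and~$y$ from the self-similar normal word promised by Theorem~\ref{thm:n=2n}. Let $w$ be a normal word satisfying $w = \even(w)$, and set $x := \odd(w)$ and $y := \even(w) = w$. With this choice, $x \vee y = w$ is normal by construction, so the normality of the join is automatic.

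Next, I would verify that $x$ and~$y$ are individually normal. For~$y$ this is immediate since $y = w$. For $x = \odd(w)$, I would invoke the elementary fact that whenever a join $u \vee v$ is normal, each of~$u$ and~$v$ is normal: every aligned occurrence of a length-$k$ block~$u_0$ in~$u$ corresponds bijectively to an aligned occurrence, at a position divisible by~$2k$, of some length-$2k$ block of the form $u_0 \vee v_0$ in $u \vee v$; summing the frequencies $|A|^{-2k}$ over the $|A|^k$ choices of~$v_0$ gives the required $|A|^{-k}$.

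The non-independence is where the self-similarity of~$w$ is used. Since $y = w = \even(w)$, we have $x = \odd(w) = \odd(\even(w)) = \odd(y)$. The map $\odd$ is realized by a two-state $1$-deterministic $2$-automaton that alternately copies and discards an input symbol, so Proposition~\ref{pro:function} yields $\rho(x/y) = \rho(\odd(y)/y) = 0$. On the other hand, $x$ is normal, hence $\rho(x) = 1$ by the characterization (recalled at the end of Section~2) of normal words as exactly those of compression ratio~$1$. Therefore $\rho(x/y) = 0 \neq 1 = \rho(x)$, and the equality $\rho(x/y) = \rho(x)$ required for finite-state independence fails.

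All the technical depth of the statement is already absorbed into Theorem~\ref{thm:n=2n}, which is the substantive step still to come; once that theorem is in hand, the present claim is a direct assembly of Theorem~\ref{thm:n=2n}, Proposition~\ref{pro:function}, and the characterization of normal words by compression ratio. I do not foresee any genuine obstacle beyond the construction of the self-similar normal word itself.
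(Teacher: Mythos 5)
Your proposal is correct and follows essentially the same route as the paper: take the self-similar normal word $w=\even(w)$ from Theorem~\ref{thm:n=2n}, split it into $\odd(w)$ and $\even(w)$ whose join is $w$ itself, and use $\odd(w)=\odd(\even(w))$ together with Proposition~\ref{pro:function} to get conditional compression ratio $0$, contradicting independence. The only difference is that you explicitly verify normality of each component (a worthwhile detail the paper leaves implicit).
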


\begin{proof}
 By Theorem~\ref{thm:n=2n}, proved below,  there is a 
normal word $x$ such that  $x = \even(x)$.
Let  $y = \odd(x)$ and $z = \even(x)$. 
 Since $x$ is normal and $x = y \vee z$, the words $y$ and~$z$ are join independent.  
Since $y=\odd(x)$ and $x= \even(x)= z$,  we  have the equality    $y = \odd(z)$. 
This implies, by Proposition~\ref{pro:function}, that
$\rho(y/z) = 0$; hence,   $y$ and~$z$ are not finite-state independent.
\end{proof}

\subsection{Construction of a normal word $x$  such that $x = \even(x)$}

\begin{theorem} \label{thm:n=2n}
There is a normal word $x$ such that  $x = \even(x)$.
\end{theorem}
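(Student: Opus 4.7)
The plan is to first distill the structural consequence of the constraint $x = \even(x)$. Iterating $x_{2n} = x_n$ gives $x_n = x_m$ where $m$ is the largest odd divisor of $n$; hence $x$ is completely determined by the subsequence $z = z_0 z_1 z_2 \cdots$ defined by $z_k := x_{2k+1}$, and any choice of $z \in A^\omega$ extends uniquely to a word $x$ satisfying $x = \even(x)$. The theorem thus reduces to exhibiting some $z$ for which the induced $x$ is normal.

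I would argue by the probabilistic method, picking each $z_k$ independently and uniformly at random from $A$. The key combinatorial observation is that whenever $n \ge \ell$, the positions $n, n+1, \ldots, n+\ell-1$ have pairwise distinct odd parts, because two positions sharing an odd part must differ by at least the smaller of the two, which would exceed $\ell - 1$. Consequently the random symbols $x_n, \ldots, x_{n+\ell-1}$ are i.i.d.\ uniform, and for every block $u \in A^\ell$,
\[
\Pr\bigl[x[n..n+\ell-1] = u\bigr] = |A|^{-\ell} \qquad \text{for every } n \ge \ell,
\]
so the expected number of occurrences of $u$ in $x[1..N]$ is $|A|^{-\ell} N + O(\ell)$.

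To pass from expectation to almost-sure convergence I would establish a variance bound on $S_N := \occ{x[1..N]}{u}$. Writing $S_N$ as a sum of indicators, the covariance between indicators at $n_1$ and $n_2$ vanishes unless some position in the window at $n_1$ shares an odd part with some position in the window at $n_2$. Counting such colliding pairs via the orbit decomposition: for each odd $m \le N$, the number of length-$\ell$ windows meeting the orbit $\{m, 2m, 4m, \ldots\}$ is at most $\ell (1 + \log_2(N/m))$, and summing $(1 + \log_2(N/m))^2$ over odd $m \le N$ gives $O(N)$, yielding $\mathrm{Var}(S_N) = O(\ell^2 N)$. Chebyshev along the sparse subsequence $N_k = k^2$ combined with Borel--Cantelli gives $S_{N_k}/N_k \to |A|^{-\ell}$ almost surely. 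Since $S_N$ is monotone in $N$ and $N_{k+1} - N_k = o(N_k)$, convergence extends to every $N$. Intersecting the countably many full-measure events, one per block $u$, shows that almost every $z$ yields a normal $x$; in particular such a $z$ exists.

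The main obstacle is organising the variance computation cleanly: one must count pairs of windows sharing an orbit via the dyadic orbit structure without over-counting, and choose the subsequence $(N_k)$ sparse enough for Borel--Cantelli summability yet dense enough for monotone interpolation. Once those bounds are in place the remainder is routine, and the theorem follows. An explicit construction (presumably what the authors mean by ``this construction has its own interest'') would refine this plan by replacing the random $z$ with a deterministic $z$ built from, for instance, a concatenation of appropriately shuffled de Bruijn blocks that mimic the expectation and variance estimates above; either route terminates in a normal $x$ with $x = \even(x)$.
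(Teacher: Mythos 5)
Your argument is correct, but it takes a genuinely different route from the paper. You prove the theorem by the probabilistic method: the constraint $x=\even(x)$ reduces $x$ to an independent uniform choice of the symbols at odd positions, the observation that positions in a window of length $\ell$ starting at $n\ge\ell$ have pairwise distinct odd parts makes each such window exactly uniform, and the orbit-counting variance bound $\mathrm{Var}(S_N)=O(\ell^2 N)$ together with Chebyshev, Borel--Cantelli along $N_k=k^2$, and monotone interpolation gives almost-sure normality (invoking the equivalence of normality via aligned and non-aligned occurrences, which the paper also uses). Each of these steps checks out; the only caveat is that the variance computation is sketched rather than fully written, though the dyadic-orbit bookkeeping you describe does close it. The paper instead gives an explicit construction: it builds a nested sequence of $\ell$-perfect words $w_n$ with $|w_n|=2^n$ and $\even(w_{n+1})=w_n$ (Lemma~\ref{lem:doublelength} and its corollaries), sets $x=11w_1w_2w_3\cdots$, and proves normality via a counting lemma of Hardy--Wright type and a criterion converting aligned-occurrence upper bounds into normality (Lemma~\ref{lemma:new}). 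The paper explicitly acknowledges your route --- it credits Alexander Shen with the result that almost all words satisfying $x=\even(x)$ are normal, and notes ``the probabilistic argument also works.'' What each approach buys: yours is shorter, yields the stronger measure-one statement, and (as the paper remarks about Shen's argument) adapts to constraints like $x_{3n}=x_n$ over a binary alphabet, where the paper's explicit construction does not directly apply; the paper's construction, on the other hand, produces a concrete computable witness, which is part of the point of the section (``this construction has its own interest'').
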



Here we prove this existence by giving an explicit construction
of  a normal word~$x = a_1a_2a_3\cdots$
 over the alphabet~$\{ 0, 1\}$ 
such that $a_{2n} = a_n$ for every~$n$.
The construction can be easily extended to an alphabet of size~$k$ to
obtain a word $a_1a_2a_3\cdots$ such that $a_{kn} = a_n$ for each integer
$n \ge 1$.  Beware that the construction, as it is presented below, cannot
provide a word $a_1a_2a_3\cdots$ over a binary alphabet such that
$a_{3n} = a_n$ 
(some more sophisticate one is needed, but it can be done; the probabilistic argument also works).

A finite word~$w$ is called \emph{$\ell$-perfect} for an
integer~$\ell \ge 1$, if $|w|$ is a multiple of~$\ell$ and 
all words of length $\ell$ have the same number of aligned occurrences 
$|w|/(\ell2^\ell)$ in $w$.
  
\begin{lemma} \label{lem:doublelength}
  Let $w$ be an $\ell$-perfect word such that $|w|$ is a multiple of
  $\ell2^{2\ell}$.  Then, there exists a $2\ell$-perfect word~$z$ of length
  $2|w|$ such that $\even(z) = w$.
\end{lemma}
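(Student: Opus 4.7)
The plan is to build $z$ by specifying its odd-position symbols, since the constraint $\even(z) = w$ already forces $z[2i] = w[i]$ for every $i$. So the free choices are the symbols $z[2i-1]$ for $i = 1,\ldots,|w|$.

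First I would re-index the aligned blocks of $z$ of length $2\ell$. Split $w$ into its $|w|/\ell$ aligned blocks of length $\ell$, calling them $w_1, w_2, \ldots, w_{|w|/\ell}$, and correspondingly let $u_j = z[(j{-}1)2\ell+1]\, z[(j{-}1)2\ell+3]\cdots z[j\cdot 2\ell - 1]$ collect the $\ell$ free symbols sitting at the odd positions inside the $j$-th aligned $2\ell$-block of $z$. A direct check then shows that the $j$-th aligned $2\ell$-block of $z$ equals exactly $u_j \vee w_j$. Moreover, the map $(u, w') \mapsto u \vee w'$ is a bijection from $\{0,1\}^\ell \times \{0,1\}^\ell$ to $\{0,1\}^{2\ell}$, so counting aligned $2\ell$-occurrences of a word $v$ in $z$ is the same as counting indices $j$ for which $(u_j, w_j)$ equals the unique pair $(u,w')$ with $u \vee w' = v$.

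Second, I would translate $2\ell$-perfection of $z$ into a combinatorial condition on the chosen $u_j$'s. Since $|z| = 2|w|$, the target number of aligned occurrences of each $v \in \{0,1\}^{2\ell}$ is $|z|/(2\ell \, 2^{2\ell}) = |w|/(\ell \, 2^{2\ell})$. For each fixed $w' \in \{0,1\}^\ell$, let $J(w') = \{j : w_j = w'\}$. By the hypothesis that $w$ is $\ell$-perfect, $|J(w')| = |w|/(\ell \, 2^\ell)$, which by the divisibility assumption $\ell\,2^{2\ell} \mid |w|$ equals $2^\ell \cdot |w|/(\ell\,2^{2\ell})$, hence is a multiple of $2^\ell$. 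So inside each class $J(w')$ I can freely assign to the indices $j \in J(w')$ the labels $u_j \in \{0,1\}^\ell$ so that every word of length $\ell$ is used exactly $|w|/(\ell\,2^{2\ell})$ times. Doing this independently for every $w'$, and reading the symbols of each $u_j$ into the odd positions of the $j$-th $2\ell$-block of $z$, defines $z$ completely.

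Finally I would verify that this $z$ works. By construction $\even(z) = w$ and $|z| = 2|w|$, and for every $v \in \{0,1\}^{2\ell}$, writing $v = u \vee w'$ uniquely, the number of aligned occurrences of $v$ in $z$ equals $|\{j \in J(w') : u_j = u\}| = |w|/(\ell\,2^{2\ell})$, which is the required value, establishing $2\ell$-perfection. The main obstacle is really just organizational: getting the bookkeeping between aligned $\ell$-blocks of $w$, aligned $2\ell$-blocks of $z$, and the join decomposition $v = u \vee w'$ lined up correctly. The divisibility hypothesis $\ell\,2^{2\ell} \mid |w|$ is used exactly once, to guarantee that each class $J(w')$ splits evenly among the $2^\ell$ possible odd-position words $u$.
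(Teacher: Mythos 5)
Your proof is correct and follows essentially the same route as the paper's: decompose $w$ into its $r=|w|/\ell$ aligned $\ell$-blocks, observe that each block value occurs $r/2^\ell$ times, which by the divisibility hypothesis is a multiple of $2^\ell$, and then distribute the odd-position $\ell$-words evenly within each class so that every $2\ell$-block $u\vee w'$ occurs exactly $r/2^{2\ell}$ times. Your write-up just makes the bookkeeping between aligned blocks and the join decomposition more explicit than the paper does.
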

\begin{proof}
  Since $|w|$ is a multiple of $\ell2^{2\ell}$ and $w$ is $\ell$-perfect,
  for each word $u$ of length $\ell$, $\alocc{w}{u}$ is a multiple of
  $2^\ell$.  Consider a factorization of $w = w_1w_2 \cdots w_r$
 such that  for each~$i$, $|w_i|=\ell$.  Thus,~$r=|w|/\ell$.
  Since $w$ is $\ell$-perfect, for any word~$u$ of length~$\ell$, the set
  $\{ i : w_i = u \}$ has cardinality~$r/2^{\ell}$.  Define $z$ of length
  $2|w|$ as $z = z_1z_2 \cdots z_r$ such that for each $i$,
  $|z_i|=2\ell$, $even(z_i)= w_i$ and for all words $u$
  and~$u'$ of length~$\ell$, the set $\{ i : z_i = u' \vee u \}$ has
  cardinality $r/2^{2\ell}$.  This latter condition is achievable because,
  for each word~$u$ of length~$\ell$, the set $\{ i : \even(z_i) = u \}$
  has cardinality $r/2^{\ell}$ which is a multiple of~$2^{\ell}$, the
  number of possible words~$u'$.
\end{proof}

\begin{corollary} \label{cor:samelength}
  Let $w$ be an $\ell$-perfect word for some even integer~$\ell$.  Then there exists 
   an $\ell$-perfect word~$z$ of length~$2|w|$ such that $\even(z)=w$.
\end{corollary}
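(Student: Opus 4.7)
The plan is to apply Lemma~\ref{lem:doublelength} at half scale by setting $\ell' = \ell/2$, which is a positive integer because $\ell$ is even. If the lemma's hypotheses hold at $\ell'$, it yields a $2\ell' = \ell$-perfect word $z$ of length $2|w|$ with $\even(z) = w$, which is exactly the claim of the corollary.

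To invoke the lemma I must verify two things: (a)~$w$ is $\ell'$-perfect, and (b)~$|w|$ is a multiple of $\ell' 2^{2\ell'} = (\ell/2)\, 2^{\ell}$. Condition (b) is immediate: the $\ell$-perfectness of $w$ forces $|w|$ to be a multiple of $\ell 2^{\ell}$, which is certainly a multiple of $(\ell/2)\, 2^{\ell}$.

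Condition (a) is the only step that requires work. More generally, I would show that every $\ell$-perfect word is $m$-perfect whenever $m$ divides $\ell$, and then specialise to $m = \ell'$. Since $m \mid \ell$, each $m$-aligned position of $w$ lies inside a unique aligned $\ell$-block, and within each block of length~$\ell$ there are exactly $\ell/m$ aligned positions of period $m$, all fitting entirely inside the block. For a fixed word $u$ of length $m$, summing over all $\ell$-words $v$ the number $c_v(u)$ of $m$-aligned occurrences of $u$ inside $v$ gives $(\ell/m)\, 2^{\ell - m}$, because each of the $\ell/m$ positions can be completed to a length-$\ell$ word in $2^{\ell-m}$ ways. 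Multiplying by the count $|w|/(\ell 2^{\ell})$ of times each $\ell$-word appears as an aligned block of $w$ yields $|w|/(m 2^{m})$ aligned occurrences of $u$ in $w$, which is precisely the $m$-perfect count. There is no real obstacle here; the parity hypothesis on $\ell$ is used exactly to make the half-scale parameter $\ell'$ an integer, and the rest is bookkeeping on top of Lemma~\ref{lem:doublelength}.
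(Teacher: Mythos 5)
Your proof is correct and follows the same route as the paper: apply Lemma~\ref{lem:doublelength} with the halved parameter $\ell/2$, after checking that $w$ is $\ell/2$-perfect and that $|w|$ is a multiple of $(\ell/2)2^{\ell}$. The only difference is that you spell out the counting argument showing that an $\ell$-perfect word is $m$-perfect for every divisor $m$ of $\ell$, a fact the paper simply asserts.
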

\begin{proof}
  Since $w$ is $\ell$-perfect, it is also $\ell/2$-perfect.  Furthermore,
  if $u$ and~$v$ are words of length~$\ell/2$ and~$\ell$ respectively then
  $\alocc{w}{u} = 2^{\ell/2+1}\alocc{w}{v}$.  Thus, the hypothesis of
  Lemma~\ref{lem:doublelength} is fulfilled with~$\ell/2$.
\end{proof}

\begin{corollary} \label{cor:neededsequence}
  There exist a sequence $(w_n)_{n\ge1}$ of words and a sequence of
  positive integers $(\ell_n)_{n\ge1}$ such that $|w_n| = 2^n$,
  $\even(w_{n+1}) =w_n$, $w_n$ is $\ell_n$-perfect and $(\ell_n)_{n\ge 1}$
  is non-decreasing and unbounded.  Furthermore, it can be assumed that
  $w_1 = 01$.
\end{corollary}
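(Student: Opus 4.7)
The plan is to build both sequences simultaneously by induction, using Lemma~\ref{lem:doublelength} whenever the length condition allows us to double the perfectness level and Corollary~\ref{cor:samelength} otherwise. I start with $w_1 = 01$ and $\ell_1 = 1$: the word $01$ contains exactly one $0$ and one $1$, hence is $1$-perfect.

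For the inductive step, suppose $w_n$ of length $2^n$ is $\ell_n$-perfect. If $2^n$ is a multiple of $\ell_n 2^{2\ell_n}$, then Lemma~\ref{lem:doublelength} produces a $2\ell_n$-perfect word $w_{n+1}$ of length $2^{n+1}$ with $\even(w_{n+1}) = w_n$, and I set $\ell_{n+1} = 2\ell_n$. Otherwise, if $\ell_n$ is even, Corollary~\ref{cor:samelength} produces an $\ell_n$-perfect word $w_{n+1}$ of length $2^{n+1}$ with $\even(w_{n+1}) = w_n$, and I set $\ell_{n+1} = \ell_n$.

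The only situation neither tool covers is $\ell_n$ odd with the length condition failing. This arises exactly once: at $n = 1$, where $\ell_1 = 1$ is odd and $|w_1| = 2 < 4 = \ell_1 2^{2\ell_1}$. I bridge this gap by hand, setting $w_2 = 1001$ and $\ell_2 = 1$: this word has two $0$s and two $1$s, hence is $1$-perfect, and satisfies $\even(w_2) = 01 = w_1$. Once $n = 2$, we have $|w_2| = 4 = \ell_2 2^{2\ell_2}$, so Lemma~\ref{lem:doublelength} applies at the next step and forces $\ell_3 = 2$. From $n = 3$ onward the value $\ell_n$ remains even (each doubling preserves evenness and each application of Corollary~\ref{cor:samelength} keeps $\ell$ fixed), so the two-case analysis above is exhaustive.

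Finally, $(\ell_n)$ is non-decreasing by construction, and it is unbounded: if it stabilised at some value $\ell^*$ from index $N$ onward, then for $n$ large enough $2^n$ would exceed — and therefore be a multiple of — $\ell^* 2^{2\ell^*}$, at which point the first case of the inductive step would trigger and yield $\ell_{n+1} = 2\ell^* > \ell^*$, contradicting stabilisation. The principal obstacle I expect is the parity mismatch at the very first step, since Corollary~\ref{cor:samelength} insists on even $\ell$; but this is a one-off issue, resolved cleanly by the ad hoc choice of $w_2$.
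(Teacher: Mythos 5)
Your proposal is correct and follows essentially the same route as the paper: start from $w_1=01$, bridge the one problematic step by hand with $w_2=1001$ and $\ell_2=1$, then alternate between Lemma~\ref{lem:doublelength} and Corollary~\ref{cor:samelength} according to whether $\ell_n 2^{2\ell_n}$ divides $2^n$. The only point worth making explicit is that your ``exceed --- and therefore be a multiple of'' step in the unboundedness argument relies on every $\ell_n$ being a power of $2$ (so that $\ell^*2^{2\ell^*}$ is itself a power of $2$), a fact that holds by construction and that the paper states explicitly.
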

\begin{proof}
  We start with $w_1 = 01$, $\ell_1 = 1$, $w_2 = 1001$ and $\ell_2 = 1$.
  For each $n\geq 2$, if $\ell_n2^{2\ell_n}$ divides~$|w_n|$, then
  $\ell_{n+1} = 2\ell_n$ and $w_{n+1}$ is obtained by
  Lemma~\ref{lem:doublelength}.  Otherwise, $\ell_{n+1} = \ell_n$ and
  $w_{n+1}$ is obtained by Corollary~\ref{cor:samelength} .  Note that the
  former case happens infinitely often, so $(\ell_n)_{n\geq 1 }$ is
  unbounded.  Also note that each $\ell_n$ is a power of~$2$.
\end{proof}

\begin{lemma}[Theorem 148~\cite{hardy}]\label{lemma:bb}
  Let $A$ be an alphabet of $b$ symbols.  Let $p(k,r,j)$ be the number of
  words of length $k$ with exactly $j$ occurrences of a given word of
  length~$r$, at any position:
  \begin{displaymath}
    p(k,r,j)=\left |\bigcup_{u\in A^r} \{ w\in A^k : \occ{w}{u}=j\} \right|
  \end{displaymath}
  For every integer $r$ greater than or equal to $1$, for every integer $k$
  large enough and for every real number $\varepsilon$ such that
  $6/\floor{k/r}\leq \varepsilon \leq 1/b^r$,
  \begin{displaymath}
    \sum_{i:\ \abs{i - k/b^r} \ge \varepsilon k} \!\!\!\!\!\!\! p(k,r,i) <
     2\ b^{k+2r-2}r \ e^{- b^r \varepsilon^2 k/6r}.
  \end{displaymath}
\end{lemma}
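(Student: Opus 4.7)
The plan is to reduce the tail estimate to a Chernoff-type inequality for sums of independent indicators, and then to use union bounds over a small number of starting-position shifts and over the $b^r$ possible patterns $u \in A^r$. The key observation is that, for a fixed pattern $u$, the candidate occurrences of $u$ in a uniformly random $w \in A^k$ split naturally according to the residue of the starting position modulo $r$: within a single residue class the candidate occurrences sit at pairwise disjoint ranges of positions in $w$, so the corresponding indicator random variables are genuinely independent $\mathrm{Bernoulli}(b^{-r})$ variables, while occurrences in different classes can overlap.

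Concretely, I would fix $u$, let $N_s(w,u)$ denote the number of occurrences of $u$ starting at positions $s+1, s+1+r, s+1+2r, \ldots$ for each $s \in \{0,\ldots,r-1\}$, and note that $N_s(w,u)$ is then a binomial with $m_s \in \{\floor{k/r}-1, \floor{k/r}\}$ trials and success probability $b^{-r}$, while $\occ{w}{u} = \sum_{s=0}^{r-1} N_s(w,u)$ has expectation within $O(r/b^r)$ of $k/b^r$. The event $\abs{\occ{w}{u} - k/b^r} \geq \varepsilon k$ then forces at least one class $s$ to satisfy $\abs{N_s(w,u) - m_s b^{-r}} \geq \varepsilon k / r - O(1)$. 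I would then apply the standard two-sided Chernoff bound $\Pr[\abs{S-\mu} \geq t] \leq 2\exp(-t^2/(3\mu))$ valid in the range $0 \leq t \leq \mu$ to each class, with $\mu \approx k/(rb^r)$ and $t = \varepsilon k/r$; this yields $2\exp(-b^r \varepsilon^2 k/(6r))$, where the constant $6$ in place of $3$ absorbs the slack coming from the $O(1)$ correction and from rounding $k/r$ to $\floor{k/r}$. The two hypotheses on $\varepsilon$ play distinct roles here: $\varepsilon \leq 1/b^r$ places Chernoff in its valid regime $t \leq \mu$, whereas $6/\floor{k/r} \leq \varepsilon$ ensures that $t$ is large enough to dominate the $O(1)$ error terms before they spoil the exponent.

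Finally I would union-bound over the $r$ residue classes and over the $b^r$ choices of $u$, then multiply by $b^k$ to convert the probability estimate into a count of words, arriving at $2 r\, b^{k+2r-2}\, e^{-b^r\varepsilon^2 k/(6r)}$; the extra $b^{r-2}$ factor relative to the naive $b^{k+r}$ produced by this argument is harmless slack, corresponding to Hardy's loose bookkeeping. The main obstacle is constant tracking through the Chernoff step: the exponent $1/(6r)$ and the factor $b^{2r-2}$ are deliberately generous overestimates, so the real work is choosing a loose-enough version of the Chernoff bound in which all the rounding errors (from $\floor{k/r}$ versus $k/r$, and from $(k-r+1)/b^r$ versus $k/b^r$) are absorbed into constants without ever needing to be sharpened individually.
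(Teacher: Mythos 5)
The paper itself does not prove this lemma: it is imported, with adapted constants, from Hardy and Wright (Theorem 148), so there is no in-paper proof to compare against. Your proposal is essentially the classical argument in modern dress: the splitting of the candidate starting positions into the $r$ residue classes modulo $r$ (so that within a class the indicator variables sit on disjoint windows and are independent $\mathrm{Bernoulli}(b^{-r})$), a binomial tail bound per class, and union bounds over the $r$ classes and the $b^r$ patterns is exactly the skeleton of the Hardy--Wright proof; you merely replace their elementary binomial-coefficient manipulations with a packaged two-sided Chernoff bound. Your reading of the two hypotheses on $\varepsilon$ is also correct, and the quantitative absorption claims do check out: the lower bound $\varepsilon \ge 6/\floor{k/r}$ gives $\varepsilon k \ge 6r$, hence $(\varepsilon k - 1)^2 \ge \tfrac{25}{36}\varepsilon^2k^2$ and $\tfrac{25}{108} > \tfrac16$, so the per-class bound $2e^{-b^r\varepsilon^2 k/(6r)}$ survives the $O(1)$ corrections; and the Chernoff validity condition $t \le \mu$ at the endpoint $\varepsilon = b^{-r}$ survives the rounding of $m_s$ because $2r-1 \le b^r$ for every $b \ge 2$.

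There is, however, one genuine flaw, precisely at the step you dismiss as ``harmless slack.'' Your union bounds produce the prefactor $2r\,b^{k+r}$, and $b^{k+r} \le b^{k+2r-2}$ holds exactly when $r \ge 2$; at $r=1$ the stated prefactor is $b^k$ while yours is $b^{k+1}$, so your bound is weaker by a factor $b$, and this factor cannot be pushed into the exponential uniformly over the allowed range of $\varepsilon$: at $\varepsilon = 6/\floor{k/r}$ the exponent $b^r\varepsilon^2k/(6r)$ is $O(rb^r/k) \to 0$, so the exponential factor is close to $1$. Thus as written your argument proves the lemma only for $r \ge 2$. Worse, with the union over $u$ built into the definition of $p(k,r,j)$, the stated $r=1$ bound appears not merely hard but false at that corner for $b \ge 3$: for $\varepsilon$ at its minimum, almost every word $w$ has all $b$ digit counts deviating by at least $\varepsilon k \approx 6$ and pairwise distinct, so $w$ contributes to roughly $b$ bad indices $i$, making the left-hand side about $b\cdot b^k$ against a right-hand side tending to $2b^k$. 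The clean fix is to prove the per-pattern statement (fixed $u$, no union factor $b^r$), which your argument yields with room to spare for all $r \ge 1$, and to note that this is all the paper actually uses in Lemma~\ref{lemma:new}, where $\varepsilon = \ell^{-1/3}$ puts the application deep inside the regime where such constant factors are irrelevant; alternatively, state your version of the lemma with the restriction $r \ge 2$, which also suffices there.
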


\begin{lemma}[Theorem~4.6~\cite{Bugeaud12}]\label{lemma:limsup}
  Let $A$ be an alphabet.  An infinite word~$x$ is normal if and only if
  there is a positive number~$C$ such that, for every every word $u$,
  \begin{displaymath}
    \limsup_{N \to\infty} \frac{\occ{x[1..N]}{u}}{N} < \frac{C}{|A|^{|u|}},
  \end{displaymath}
\end{lemma}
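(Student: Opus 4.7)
The plan is to prove the two implications separately. The forward direction (normal implies the bound) is immediate: normality gives $\lim_N \occ{x[1..N]}{u}/N = |A|^{-|u|}$ for every $u$, so any $C > 1$, e.g.\ $C = 2$, witnesses the strict inequality. For the substantive converse, I would argue by contradiction: assuming the bound holds with some constant $C$ but $x$ is not normal, I want to exhibit a word $v$ of some large length $k$ whose frequency in prefixes of $x$ forces $\limsup_N \occ{x[1..N]}{v}/N \cdot |A|^k \ge C$. The first step extracts an ``excess'' word. Fix $r$ and write $\alpha(u) = \limsup_N \occ{x[1..N]}{u}/N$, $\beta(u) = \liminf_N \occ{x[1..N]}{u}/N$. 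From $\sum_{u \in A^r} \occ{x[1..N]}{u} = N - r + 1$ and Fatou's inequality, $\sum_u \alpha(u) \ge 1$. Since $x$ is not normal, some length $r$ carries a word $u_0$ with either $\alpha(u_0) > |A|^{-r}$, or $\beta(u_0) < |A|^{-r}$; in the latter case the remaining $|A|^r - 1$ words must absorb the missing mass along a subsequence, forcing some $u_0'$ to satisfy $\alpha(u_0') > |A|^{-r}$. In either case one obtains a word $u_0$ of length $r$ and a $\delta > 0$ with $\alpha(u_0) \ge (1 + \delta)|A|^{-r}$.

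The second step amplifies via long windows and then pigeonholes against Lemma~\ref{lemma:bb}. Fix a block length $k \gg r$ (to be chosen later) and a subsequence $(N_j)$ with $\occ{x[1..N_j]}{u_0}/N_j \ge (1 + \delta/2)|A|^{-r}$. Partition $x[1..N_j]$ into $N_j/k$ aligned blocks of length $k$; at most $(r-1) N_j/k$ occurrences of $u_0$ straddle boundaries, so the block-internal occurrences still total at least $(1 + \delta/3) N_j |A|^{-r}$ for $k$ large. Call a block \emph{atypical} if it contains at least $(1 + \delta/4) k |A|^{-r}$ copies of $u_0$; a straightforward averaging (atypical blocks contribute at most $k$, non-atypical at most $(1+\delta/4)k|A|^{-r}$) yields a constant fraction $c = c(\delta, r, |A|) > 0$ of atypical blocks. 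Feeding Lemma~\ref{lemma:bb} with deviation parameter $\varepsilon$ just below $(\delta/4)|A|^{-r}$ bounds the number of \emph{distinct} atypical words of length $k$ by $2\,|A|^{k+2r-2} r\, e^{-|A|^r \varepsilon^2 k/(6r)} = |A|^k \cdot e^{-c'k + O(1)}$ with $c' > 0$ depending only on $\delta, r, |A|$. Since $x[1..N_j]$ contains at least $c N_j/k$ atypical blocks drawn from at most $|A|^k e^{-c'k}$ distinct words, pigeonhole produces a single word $v$ of length $k$ appearing as a block at least $c\, e^{c'k} N_j / (k\,|A|^k)$ times; hence $\alpha(v)\,|A|^k \ge c\, e^{c'k}/k$, which exceeds the fixed $C$ for $k$ large enough.

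The main obstacle is the calibration in the second step: the exponential saving $e^{-c'k}$ delivered by Lemma~\ref{lemma:bb} on the number of distinct atypical words of length $k$ must strictly dominate the polynomial-in-$k$ loss from window alignment and pigeonhole, while simultaneously the chosen $\varepsilon$ must satisfy $6/\lfloor k/r \rfloor \le \varepsilon \le |A|^{-r}$. Because $C$ is fixed independent of $k$, and $c'$ depends only on the extracted $\delta$ (and on $r,|A|$), every sufficiently large $k$ works and closes the contradiction against the assumed uniform bound $\alpha(v) < C/|A|^k$.
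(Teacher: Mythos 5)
Your proposal is correct in substance, but note first that the paper does not actually prove this lemma: it is imported as a black box from the literature (Theorem~4.6 in Bugeaud's book, i.e.\ Piatetski-Shapiro's hot-spot theorem), so there is no in-paper proof to compare against. What you have reconstructed is essentially the classical proof of that theorem, and it runs on exactly the machinery the paper deploys one lemma later: in the proof of \lemmaref{new} the paper likewise splits the words of a long length into good and bad according to whether the count of a short word deviates by $\varepsilon \ell$, and invokes \lemmaref{bb} for the exponential bound $2|A|^{k+2r-2}r\,e^{-|A|^r\varepsilon^2 k/(6r)}$ on the bad ones. The direction of use differs: the paper uses that counting to pass from aligned to non-aligned occurrences and then cites \lemmaref{limsup} as known, whereas you use it contrapositively, amplifying one excess word $u_0$ of length $r$ (extracted via the Fatou/mass-absorption step, which is sound) into a single rare word $v$ of length $k$ with $\limsup_N \occ{x[1..N]}{v}/N \ge c\,e^{c'k}/(2|A|^{2r-2}rk)\cdot|A|^{-k}$, beating the fixed $C|A|^{-k}$ for large $k$. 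Your calibration is right: every atypical block deviates by at least $(\delta/4)k|A|^{-r}\ge\varepsilon k$, so \lemmaref{bb} does bound the number of \emph{distinct} atypical words, and the exponential gain indeed dominates the polynomial losses.

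Three details need patching, all routine. (i) Your pigeonhole produces a word $v$ for each prefix length $N_j$ separately, but the limsup needs one $v$ working for infinitely many $N$; since the set of atypical words of length $k$ is finite and independent of $j$, pigeonhole a second time over $j$ to fix $v$. (ii) You may assume $\delta\le 1$ without loss of generality, so that $\varepsilon<(\delta/4)|A|^{-r}\le|A|^{-r}$ as \lemmaref{bb} requires; the lower constraint $6/\lfloor k/r\rfloor\le\varepsilon$ then holds for all sufficiently large $k$ because $\varepsilon$ is a constant depending only on $\delta$, $r$, $|A|$. (iii) Both your forward direction and your extraction of $u_0$ from ``$x$ not normal'' tacitly use the equivalence between the aligned-occurrence definition of normality adopted in this paper and the all-occurrences one; that equivalence is itself a nontrivial theorem (Theorem~4.5 of the same reference), but the paper explicitly treats it as known, so relying on it is consistent with the paper's conventions.
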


Finally, the next lemma  is similar to Lemma~\ref{lemma:limsup} but with  aligned occurrences.
\begin{lemma}\label{lemma:new}
  Let $A$ be an alphabet.  An infinite word~$x$ is normal if and only if
  there is a positive number~$C$ such that, such that for infinitely many
  lengths~$\ell$, for every word~$w$ of length~$\ell$,
  \begin{displaymath}
    \limsup_{N \to \infty} \frac{\alocc{x[1..\ell N]}{w}}{N} < \frac{C}{|A|^\ell}.
  \end{displaymath}
\end{lemma}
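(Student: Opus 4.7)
The plan is to prove both directions separately, with the forward direction being immediate from the definition of normality, and the reverse direction reducing to Lemma~\ref{lemma:limsup} by bounding regular occurrences in terms of aligned occurrences.

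For the ($\Rightarrow$) direction: if $x$ is normal, then for every length $\ell$ and every $w \in A^\ell$, the definition of simple normality gives $\lim_{N\to\infty} \alocc{x[1..\ell N]}{w}/N = |A|^{-\ell}$. Hence, taking for instance $C = 2$, the stated $\limsup$ bound holds for \emph{every} $\ell$, and in particular for infinitely many $\ell$.

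For the ($\Leftarrow$) direction: I would reduce to the hypothesis of Lemma~\ref{lemma:limsup}. Fix an arbitrary word $u$ of length $r$. The key observation is that any (regular) occurrence of $u$ at some position $i$ in $x[1..M\ell]$ is of one of two types: either $u$ lies entirely inside some aligned block of length $\ell$ (there are $\ell - r + 1$ possible internal offsets $j \in \{1,\ldots,\ell-r+1\}$, and for each such $j$ the number of length-$\ell$ words $w$ containing $u$ at position $j$ is exactly $|A|^{\ell-r}$), or $u$ straddles a boundary between two consecutive aligned blocks (at most $r-1$ starting positions per boundary, hence at most $(r-1)M$ contributions in $x[1..M\ell]$). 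This gives
\begin{displaymath}
  \occ{x[1..M\ell]}{u} \;\le\; (\ell-r+1)\,|A|^{\ell-r} \max_{w \in A^\ell} \alocc{x[1..M\ell]}{w} \;+\; (r-1)M.
\end{displaymath}
Dividing by $M\ell$, taking $\limsup$ as $M \to \infty$, and applying the hypothesis $\limsup \alocc{x[1..M\ell]}{w}/M < C/|A|^\ell$ for each $w \in A^\ell$, I obtain
\begin{displaymath}
  \limsup_{M\to\infty} \frac{\occ{x[1..M\ell]}{u}}{M\ell} \;<\; \frac{(\ell-r+1)\,|A|^{\ell-r}}{\ell}\cdot\frac{C}{|A|^\ell} + \frac{r-1}{\ell} \;\le\; \frac{C}{|A|^r} + \frac{r-1}{\ell}.
\end{displaymath}

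To finish, I would pick $\ell$ (from the infinitely many values satisfying the hypothesis) large enough compared to $r$ so that $(r-1)/\ell < 1/|A|^r$, yielding $\limsup \occ{x[1..M\ell]}{u}/(M\ell) < (C+1)/|A|^r$; and I would handle general prefix lengths $N$ (not multiples of $\ell$) by noting that replacing $N$ by the nearest multiple of $\ell$ changes the count by at most $\ell$, which vanishes after dividing by $N$. Since the constant $C' = C+1$ does not depend on $u$, Lemma~\ref{lemma:limsup} applies and $x$ is normal. The main care point — rather than a real obstacle — is ensuring that the additive boundary-crossing error $(r-1)/\ell$ is made small uniformly in $u$; this is why the hypothesis that the bound holds for \emph{infinitely many} $\ell$ (and not just one) is used.
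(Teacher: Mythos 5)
Your proof is correct, but the reverse direction takes a genuinely different and more elementary route than the paper's. Both arguments start from the decomposition $\occ{x[1..N]}{u}\le\sum_{w\in A^\ell}\alocc{x[1..N]}{w}\bigl(\occ{w}{u}+(r-1)\bigr)$, but the paper then splits the blocks $w$ into \emph{good} and \emph{bad} according to whether $\occ{w}{u}$ deviates from $\ell/|A|^{r}$ by more than $\varepsilon\ell$, invokes the Hardy--Wright concentration estimate (Lemma~\ref{lemma:bb}) to show that bad blocks are exponentially rare, applies the hypothesis on aligned frequencies only to the bad blocks, and bounds the good blocks' total aligned count by the trivial $N/\ell$. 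You avoid Lemma~\ref{lemma:bb} entirely: you apply the hypothesis to \emph{every} block and use the exact identity $\sum_{w\in A^\ell}\occ{w}{u}=(\ell-r+1)|A|^{\ell-r}$ to bound the within-block contribution by $(\ell-r+1)|A|^{\ell-r}\max_{w}\alocc{x[1..M\ell]}{w}$. What the paper's extra machinery buys is a final bound of essentially $(1+o(1))/|A|^{r}$ rather than your $(C+1)/|A|^{r}$, i.e.\ it nearly preserves the constant; but since Lemma~\ref{lemma:limsup} only requires \emph{some} uniform constant, your weaker bound suffices. Your treatment of the boundary-straddling term, of prefix lengths $N$ that are not multiples of $\ell$, and of the choice of $\ell$ from the infinitely many admissible values (which is indeed where the ``infinitely many $\ell$'' hypothesis is used) is all in order.
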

\begin{proof}
  The implication from left to right is immediate from the definition of
  normality.  We prove the other.  Fix alphabet $A$ with $b$ symbols, fix
  $x$ and $C$.  Assume  that for infinitely many lengths~$\ell$, for  every word $w$ of length~$\ell$,
 the stated condition holds.  Equivalently,
  \begin{equation} \label{eq:hyp}\tag{*}
\limsup_{N \to \infty } \frac{\alocc{x [1..N]  }{w}}{N} < \frac{C}{\ell |A|^{\ell}}.
  \end{equation}
We will prove that for every word $u$, of any  length,
  \begin{displaymath}
    \limsup_{N \to \infty } \frac{\occ{x [1..N]  }{u}}{N} < \frac{C}{|A|^{u}}.
  \end{displaymath}
  and  conclude  that $x$ is normal by   Lemma ~\ref{lemma:limsup}.  
The task  is to switch from  aligned  occurrences  to non-aligned occurrences.
For this we  consider  long consecutive  blocks and the fact that  most of them have the expected
 number of non-aligned occurrences of small  blocks inside.

 Fix a length $r$ and a word $u$ of length $r$. 
 Let $\ell$ be any length greater than $r$ for which  (\ref{eq:hyp}) holds.
Fix $\varepsilon$. 
We group the words of length~$\ell$ in  good and bad for $r$ and $\varepsilon$. 
The bad ones deviate from the expected number of occurrences of some word of length~$r$ by  $\varepsilon \ell$ or more.
The good ones  do not.   Lemma~\ref{lemma:bb} bounds the number of these bad words.

We  use  that each bad word has at most $\ell -r+1$ occurrences of~$u$;
each good word  has at most    $\ell/b^r + \varepsilon \ell/b^r$ occurrences of $u$;
and in between any of two consecutive blocks of length  $\ell$  there are at most $r-1$ occurrences of~$u$.

\begin{align*}
\hspace{-1cm}\frac{\occ{x[1..N]}{u}}{N} 
&<\frac{1}{N} \sum_{w\in A^\ell}  \alocc{x[1..N]}{w} (  \occ{w}{u} + (r-1))  
\\
&=\frac{1}{N}  \sum_{\text{bad } w} \alocc{x[1..N]}{w} \big(  \occ{w}{u} + (r-1)\big)  +
     \frac{1}{N} \sum_{\text{good } w} \alocc{x[1..N]}{w} \big(  \occ{w}{u} + (r-1)\big) 
\\
&<\frac{1}{N}   (\ell -r+1 + r-1) \sum_{\text{bad } w} \alocc{x[1..N]}{w}  +
 \frac{1}{N} \left(\frac{\ell}{b^r} + \varepsilon\ell +r-1\right)    \sum_{\text{good } w} \alocc{x[1..N]}{w}  
\\
&<\frac{1}{N} \ \ell \ \big( 2 b^\ell  \  b^{2r-2} r \ e^{-b^r \varepsilon^2 \ell/(6r)}  \big)   \frac{C N}{\ell  b^{\ell}}  +
\frac{1}{N}    \left(\frac{\ell}{b^r} + \varepsilon\ell +r-1\right)        \frac{ N}{\ell} 
\\
&=2   b^{2r-2 } r  e^{-b^r \varepsilon^2 \ell/(6r)}  C +
 \left(\frac{1}{b^{r}} + \varepsilon +\frac{r-1}{\ell}\right).
\end{align*}
For $\ell$ large enough and $\varepsilon=\ell^{-1/3}$ the values
$2 b^{2r-2 } r e^{-b^r \varepsilon^2 \ell/(6r)} C$ and
$ \left( \varepsilon +\frac{r-1}{\ell}\right)$ are arbitrarily small. So,
  \begin{displaymath}
    \limsup_{N\to\infty} \frac{\occ{x[1..N]}{u}}{N} < \frac{C}{b^r}.\qedhere
  \end{displaymath}
\end{proof}

\begin{proof}[Proof of Theorem~\ref{thm:n=2n}]
  Let $(w_n)_{n\ge1}$ be a sequence given by
  Corollary~\ref{cor:neededsequence}.  
  Let $x = 1 1 w_1w_2w_3 \cdots$
  We first prove that $x$ satisfies $x = \even(x)$.  Note that
  $x[2^k+1..2^{k+1}] = w_k$ for each $k \ge 1$ and
  $x[1..2^{k+1}] = 11w_1 \cdots w_k$.  The fact that $w_n = \even(w_{n+1})$
  implies $x[2n] = x[n]$, for every $n \ge 3$.  The cases  for
  $n = 1$ and $n = 2$ hold because  $x[1..4] = 1101$.

  We prove that $x$ is normal.  Consider an arbitrary index
  $n_0$.  By construction, $w_{n_0}$ is $\ell_{n_0}$-perfect and for each
  $n\geq n_0$, $w_n$ is also $\ell_{n_0}$-perfect.  For every word $u$ of
  length $\ell_{n_0}$ and for every~$n\geq n_0$,
   \begin{displaymath}
    \alocc{ x[1..2^{n+1}]}{u} \leq  \alocc{ x[1..2^{n_0}]}{u} + \alocc{w_{n_0} \ldots w_n}{u} 
  \end{displaymath}
  Then, for every $N$ such that $2^{n}\leq  N < 2^{n+1}$  and $n\geq n_0$,
  \begin{align*}
  \frac{\alocc{x[1..N]}{u}}{N/\ell_{n_0}}  
    & \le \frac{\alocc{x[1..2^{n+1}]}{u}}{N /\ell_{n_0}}   \\
    & \le \frac{\alocc{x[1..2^{n_0}]}{u}  +\alocc{w_{n_0} \ldots w_n}{u}}
               {N/ \ell_{n_0}}  \\
    & \le \frac{\alocc{x[1..2^{n_0}]}{u}}{2^n/ \ell_{n_0}} + 
          \frac{\alocc{w_{n_0}\ldots  w_n}{u}}{2^n/\ell_{n_0}}  \\
    & = \frac{\alocc{x[1..2^{n_0}]}{u}}{2^n/ \ell_{n_0}} + 
        \frac{(2^{n_0} + \ldots + 2^n)/(\ell_{n_0} 2^{\ell_{n_0}})}
             {2^n/ \ell_{n_0}}  \\
    & = \frac{\alocc{x[1..2^{n_0}]}{u}}{2^n/ \ell_{n_0}} + 
        \frac{2^{n+1}-2^{n_0}}{2^n 2^{\ell_{n_0}}}   \\
    & < \frac{\alocc{x[1..2^{n_0}]}{u}}{2^n/ \ell_{n_0}} + 
        \frac{2}{2^{\ell_{n_0}}}.
  \end{align*}
  For large values of $N$ and $n$ such that $2^{n}\leq N< 2^{n+1}$, the
  expression 
\[
\frac{\alocc{ x[1..2^{n_0}]}{u} }{ 2^n/ \ell_{n_0}} 
\]
 becomes
  arbitrarily small.  We obtain for every word $u$ of length $\ell_{n_0}$,
  \begin{displaymath}
  \limsup_{N\to \infty} \frac{\alocc{x[1..N]}{u}}{N/ \ell_{n_0}} \leq 3\ 2^{-\ell_{n_0}}.
  \end{displaymath}
  Since the choice of ${\ell_{n_0}}$ was arbitrary, the above inequality holds for
  each $\ell_n$.  
 Since $(\ell_n)_{n\geq 1}$ is unbounded, the hypothesis of  Lemma
 ~\ref{lemma:new} is fulfilled, with $C=3$, so we conclude that $x$ is normal.
\end{proof}

Alexander Shen (personal communication, August 2016) proved that   
almost all binary words satisfying $x = \even(x)$ are normal. 
The  argument in his proof works if the distances between different occurrences of 
 the same repeated symbol grow sufficiently  fast.
For example, his argument can be also used to prove that almost all binary words satisfying 
$x_{3n}=x_n$ are normal.

\section{Almost all pairs are independent}

The next theorem establishes that almost all pairs of infinite words are independent.

\begin{theorem} \label{thm:measureI} 
  The set $I = \{(x,y) : \text{$x$ and $y$ are independent}\}$ has  measure~$1$.
\end{theorem}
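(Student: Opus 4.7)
The plan is to show the complement $I^{c}$ has product measure zero by combining countable enumeration, the Borel--Cantelli lemma, and a counting argument that exploits the one-to-oneness of compressors. First I would reduce to showing $\rho(x/y)=1$ almost surely (and symmetrically $\rho(y/x)=1$). By Borel's theorem almost every word is normal, and by the characterization recalled in Section~2.2 normal words have compression ratio~$1$, so $\rho(x)=\rho(y)=1$ for almost every pair, which already handles the non-vanishing clause in the definition of independence. Moreover, any $1$-deterministic $2$-automaton can be turned into a compressor that simply reads and discards tape~$2$, giving $\rho(x/y)\le\rho(x)\le 1$ unconditionally, so the equality $\rho(x/y)=\rho(x)$ in the definition of independence is a.s.\ equivalent to $\rho(x/y)=1$.

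Since every compressor is a finite object, only countably many compressors $\mathcal{C}_{1},\mathcal{C}_{2},\ldots$ exist, and $\{(x,y):\rho(x/y)<1\}=\bigcup_{i,k}E_{i,k}$ where $E_{i,k}=\{(x,y):\rho_{\mathcal{C}_{i}}(x/y)<1-1/k\}$; it therefore suffices to show $\mu(E_{i,k})=0$ for each $(i,k)$. Fix $\mathcal{C}=\mathcal{C}_{i}$ and $\varepsilon=1/k$, and let $B_{n}$ be the set of pairs $(x,y)$ for which the output of $\mathcal{C}$ has length strictly less than $(1-\varepsilon)n$ at the moment the automaton has consumed the first $n$ symbols of $x$. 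Then $E_{i,k}\subseteq\limsup_{n}B_{n}$, and by Borel--Cantelli it is enough to prove $\sum_{n}\mu(B_{n})<\infty$.

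To bound $\mu(B_{n})$ I would parametrize each point of $B_{n}$ by the pair $(u,v)\in A^{n}\times A^{*}$ where $u=x[1..n]$ and $v$ is the prefix of $y$ consumed when $u$ is exhausted, so that the corresponding cylinder has measure $|A|^{-n-|v|}$. The central injectivity observation is: for each fixed $v$, distinct $u\ne u'$ in $A^{n}$ cannot leave $\mathcal{C}$ in the same configuration (state $q$, $y$-position $|v|$, output $w$), since otherwise appending a common tape-$1$ tail $\tilde{x}$ would make the two runs coincide from that point on and yield $\mathcal{C}(u\tilde{x},y)=\mathcal{C}(u'\tilde{x},y)$, contradicting the one-to-oneness of $\mathcal{C}(\cdot,y)$. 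Hence for each fixed $v$ the number of admissible $u$ is at most $|Q|\cdot |A|^{(1-\varepsilon)n+1}/(|A|-1)$, and summing cylinder measures gives, in each $y$-fiber, a bound of the shape $\mu_{x}(B_{n}^{y})\le C(\mathcal{C})\,(J_{\max}(n,y)+1)\,|A|^{-\varepsilon n}$, where $J_{\max}(n,y)$ is the longest $y$-prefix that can be consumed while $n$ tape-$1$ symbols are read.

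The main obstacle is precisely this factor $J_{\max}(n,y)$: because $\mathcal{C}$ may perform arbitrarily many $\varepsilon,\beta\,|\,\cdots$ transitions between two consecutive tape-$1$ reads, $J_{\max}(n,y)$ could in principle grow superpolynomially. I would control it by noting that, for fixed $\mathcal{C}$, the tape-$2$-only skip lengths between tape-$1$ reads are governed by a finite-state Markov chain driven by the uniform measure on $y$, whose hitting time to tape-$1$-reading states is a.s.\ finite (otherwise $\mathcal{C}$ would fail to have an accepting run on a positive-measure set of $y$). Standard tail estimates for these hitting times give, for a.e.\ $y$, a polynomial bound $J_{\max}(n,y)=O(n^{c})$ for some constant $c$ depending only on $\mathcal{C}$. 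Combined with the $|A|^{-\varepsilon n}$ factor from the counting step, this yields $\sum_{n}\mu_{x}(B_{n}^{y})<\infty$ for a.e.\ $y$; Borel--Cantelli applied fiberwise together with Fubini then give $\mu(E_{i,k})=0$, and taking the countable union over $(i,k)$ combined with the a.s.\ values $\rho(x)=\rho(y)=1$ from the first step establishes $\mu(I^{c})=0$.
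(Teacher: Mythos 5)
Your overall architecture is the right one and matches the paper's: reduce to $\rho(x/y)=1$ a.e.\ via normality, enumerate the countably many compressors, bound the fiberwise measure of the ``short output after $n$ input symbols'' events by counting configurations $(q,|v|,w)$ and invoking the one-to-oneness of $x\mapsto\mathcal{C}(x,y)$, then finish with Borel--Cantelli and Fubini. You also correctly isolate the one genuine obstacle: the configuration count carries a factor equal to the number of oracle symbols that may have been consumed, and this must be controlled. The gap is in how you control it. Your justification that the excursions on tape-$2$-only states have a.s.\ finite (hence geometrically-tailed) hitting times --- ``otherwise $\mathcal{C}$ would fail to have an accepting run on a positive-measure set of $y$'' --- is not a valid inference: a compressor is only required to compute a \emph{partial} function that is injective in $x$ for each fixed $y$, so nothing forbids it from lacking accepting runs on a positive-measure set of pairs. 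Moreover, even granting geometric tails for a single walk, your quantity $J_{\max}(n,y)$ is a supremum over all $|A|^{n}$ prefixes $u$, each of which drives the oracle head through $y$ along a different trajectory; turning per-$u$ tail bounds into a bound on this supremum requires a union bound over exponentially many events (which, with geometric tails, would give roughly $O(n^{2})$ rather than your unexplained $O(n^{c})$), and none of this appears in the write-up. As it stands, the key estimate $\sum_{n}\mu_{x}(B_{n}^{y})<\infty$ is asserted rather than proved.

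The paper closes this gap differently and deterministically: it first restricts to \emph{normal} oracles $y$ (a measure-one set, which costs nothing by Fubini), and proves a combinatorial lemma (Lemmas~\ref{lemma:relation} and~\ref{lemma:lemmaK}) based on ``forward words'' --- a finite word $v$ with a maximal set of forward pairs, so that every occurrence of $v$ on the oracle tape forces at least one symbol to be read from the input tape. Since a normal $y$ contains disjoint occurrences of $v$ with positive density, this yields the uniform linear bound $|v_{1}|\le K|u_{1}|$ for \emph{every} input $x$ and every point of the run, with $K$ depending only on the automaton. That single deterministic inequality replaces your entire probabilistic excursion analysis and its union bound. Your route could likely be repaired (handling non-accepting runs separately and doing the union bound over $u\in A^{n}$ explicitly), but as written the central step is missing; I would recommend either supplying those details or adopting the forward-word argument.
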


To prove it we use that if the oracle~$y$ is
normal then the number of symbols read from~$y$, is linearly bounded by
the number of symbols read from the input~$x$.  This property, stated in
Lemma~\ref{lemma:lemmaK} below, requires the notion of a finite run  and the notion of a forward pair.

A finite run  of a $k$-automaton $\mathcal T$ is a finite sequence of consecutive transitions
\begin{displaymath}
  q_0 \trans{a_{1,1},\ldots,a_{k,1}} q_1 
      \trans{a_{1,2},\ldots,a_{k,2}} q_2 \trans{}\cdots\trans{} q_{n-1} 
      \trans{a_{1,n},\ldots,a_{k,n}} q_n.
\end{displaymath}
The label of the run is the component-wise concatenation of the labels of the transitions.  
More precisely, it is the tuple $\tuple{u_1,\ldots,u_k}$ where each 
$u_j$ for $1 \le j \le k$ is equal to $a_{j,1}a_{j,2}\cdots a_{j,n}$.  
Such a run is written shortly as $q_0 \trans{u_1,\ldots,u_k} q_n$.

Let $\mathcal{T}$ be a $2$-deterministic $3$-automaton whose state
set is $Q$. 
 Let $v$ be a finite word. A~pair $(p,a) \in Q \times A$ is
called a \emph{forward pair} of~$v$ if there is a finite run
$p \trans{au,v|w} q$ for some finite words $u$ and~$w$ and some state~$q$.
A finite word~$v$ is called a \emph{forward word} if it has a maximum
number of forward pairs.
Since the number of pairs $(p,a) \in Q \times A$ is finite, there exist forward words.  
Note that, in case the automaton is total,   every extension of a forward word is also a forward word.
However, not every prefix of a forward word is a forward word.

\begin{lemma}\label{lemma:relation}
  Let $\mathcal{T}$ be a $2$-deterministic $3$-automaton and let $x$ and
  $y$ be two infinite words such that the run
  $\gamma =  q_0 \trans{x,y|z} \limrun$ is accepting. 
 Let $v$ be a forward
  word for~$\mathcal{T}$. For each factorization
  $\gamma =  q_0 \trans{u_1,v_1|w_1} q_1 \trans{u_2,v_2|w_2} q_2
  \trans{x_1,y_1|z_1} \limrun$  such that $v$ occurs in~$v_2$, $u_2$ is
  non-empty.
\end{lemma}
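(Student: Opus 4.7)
The plan is to argue by contradiction. Suppose $u_2 = \varepsilon$ and that $v$ occurs in $v_2$. I will construct a finite word $\bar v$ whose number of forward pairs is strictly greater than that of $v$, contradicting the assumption that $v$ is a forward word.

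First I would factor the middle segment at the occurrence of $v$: writing $v_2 = v' v v''$, the hypothesis $u_2 = \varepsilon$ yields a decomposition of the middle sub-run as $q_1 \trans{\varepsilon, v'|\cdot} p \trans{\varepsilon, v|\cdot} p' \trans{\varepsilon, v''|\cdot} q_2$. Every transition in this decomposition reads $\varepsilon$ on tape~1, so by $2$-determinism each of the states $p$, $p'$ and the intermediate states has all its outgoing transitions of type $(0,1)$ (reading only tape~2). Consequently, starting from $p$ with tape-2 content $v$ and \emph{any} tape-1 content, $2$-determinism forces the run to read exactly $v$ on tape~2 and nothing on tape~1, ending at $p'$. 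Hence no finite run of the form $p \trans{a u, v|w} q$ can exist for any $a \in A$ (such a run would read at least the symbol $a$ on tape~1), and therefore $(p,a)$ is \emph{not} a forward pair of $v$ for any $a$.

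Next I would continue along $\gamma$ past $p'$ until the first state whose outgoing transition reads a symbol from tape~1; such a state exists because $x$ is infinite. Call it $t$, let $y^\ast \in A^*$ be the tape-2 content read from $p'$ to $t$ along $\gamma$, let $a \in A$ be the tape-1 symbol read at $t$ in $\gamma$, and let $b \in A \cup \{\varepsilon\}$ be the tape-2 symbol read simultaneously at $t$ (with $b = \varepsilon$ when $t$ has type $(1,0)$). Define $\bar v = v\, y^\ast\, b$. Composing the relevant sub-runs yields $p \trans{a, \bar v|\cdot} t'$ for some state $t'$, which shows that $(p, a)$ \emph{is} a forward pair of $\bar v$.

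The main obstacle is the remaining inclusion: every forward pair $(q, a')$ of $v$ is also a forward pair of $\bar v$. For each witness run $q \trans{a' u, v|w} r$ one needs to extend the run from state $r$ so as to consume the additional tape-2 content $y^\ast b$, together with a suitable tape-1 continuation. Such a continuation is dictated by $2$-determinism along the $(0,1)$-states traversed while reading $y^\ast$ and by the transition at $t$, and I would derive its existence from the infinite accepting run $\gamma$ itself, which already contains a matching tape-1 continuation for the tape-2 template $y^\ast b$ and which can be transported to~$r$. Once this inclusion is established, combining it with the new pair $(p,a)$ gives that $\bar v$ has strictly more forward pairs than $v$, contradicting that $v$ is a forward word.
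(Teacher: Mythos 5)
Your overall strategy is the one the paper uses: assuming $u_2=\varepsilon$, isolate the sub-run reading the occurrence of $v$, show that the pair $(p,a)$ is a forward pair of a right extension $\bar v$ of $v$ but not of $v$ itself, and contradict the maximality of the number of forward pairs of $v$. (The paper first reduces to $v_2=v$ and works with $(q_1,a)$ where $a$ is the first symbol of $x_1$; your factorization $v_2=v'vv''$ with the states $p,p'$ is an equivalent bookkeeping.) One caveat on your second step: $2$-determinism does force every run from $p$ whose tape-$2$ label begins with $v$ to pass through $p'$ having read $(\varepsilon,v)$, but a witness run $p\trans{au,v|w}q$ need not \emph{stop} at $p'$; it may continue from $p'$ through transitions that read only tape $1$, which keep the tape-$2$ label equal to $v$ while making the tape-$1$ label non-empty. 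This is exactly the case $y^*=b=\varepsilon$ in your notation, where $\bar v=v$ and your two claims about $(p,a)$ contradict each other. The paper's one-line assertion that $(q_1,a)$ is not a forward pair of $v$ glosses over the same configuration, so this is a shared weakness, but your write-up should at least dispose of the case $\bar v=v$.

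The genuine gap is in the step you yourself flag as the main obstacle, namely that every forward pair of $v$ remains a forward pair of $\bar v=vy^*b$; without it, $\bar v$ could gain $(p,a)$ while losing other pairs, and no contradiction with maximality follows. Your proposed justification does not work: given a witness run $q\trans{a'u,v|w}r$, the possible continuations from $r$ on tape-$2$ input $y^*b$ are governed by $2$-determinism \emph{at $r$} and at the states reachable from $r$; they have nothing to do with the $(0,1)$-type states that $\gamma$ traverses between $p'$ and $t$. There is no sense in which the tape-$1$ continuation present in $\gamma$ "can be transported to $r$": from $r$ the automaton may interleave its reads in an entirely different pattern, may loop forever on tape-$1$-only transitions without ever consuming a tape-$2$ symbol, or — if the automaton is not total — may have no continuation at all, so that $(q,a')$ is simply not a forward pair of $\bar v$. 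This monotonicity of forward pairs under right extension is precisely what the paper leaves implicit (compare its remark that, for \emph{total} automata, every extension of a forward word is a forward word), so you have correctly located the crux of the argument; but the transport argument you sketch would not survive being written out, and some additional hypothesis or construction (e.g.\ totality of the transition relation) is needed to close it.
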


\begin{proof}
  It suffices to prove the result when the word~$v_2$ is equal to~$v$.
  Suppose by contradiction that $u_2$ is empty.  Let $a$ be the first
  symbol of~$x_1$.  Since $u_2$ is empty, the pair $(q_1,a)$ is not
  forward pair of~$v$.  Since $x_1$ is infinite, there exists a
  right extension~$vv'$ of~$v$ such that $(q_1,a)$ is a forward pair
  of~$vv'$.  This contradicts the fact that $v$ has a maximum number
  of forward pairs.
\end{proof}

\begin{lemma} \label{lemma:lemmaK}
  Let $\mathcal{T}$ be a $2$-deterministic $3$-automaton and let $x$ and
  $y$ be two infinite words such that the run
  $\gamma =  q_0 \trans{x,y|z} \limrun$ is accepting.  If $y$ is normal,
  there is a constant~$K$ depending only on~$\mathcal{T}$ such that for
  any factorization $\gamma =  q_0 \trans{u_1,v_1|w_1} q_1 
  \trans{x_1,y_1|z_1} \limrun$
such that  $|u_1|$ is long enough, $|v_1| \le K|u_1|$.
\end{lemma}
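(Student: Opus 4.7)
The plan is to lower bound $|u_1|$ by the number of aligned occurrences of a fixed forward word $v$ inside $v_1$ (via \lemmaref{relation}) and to lower bound that count by a linear function of $|v_1|$ (via normality of $y$). Combining the two yields $|v_1| \le K |u_1|$ for a suitable constant $K$ depending only on $\mathcal{T}$.

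First I would fix a forward word $v$ for $\mathcal{T}$ of length $r$; such a word exists because the set of forward pairs lives in the finite set $Q \times A$, so some word achieves a maximum number of forward pairs. I then claim that $\alocc{v_1}{v} \le |u_1|$. For each aligned position $i$ (i.e.\ $i \equiv 1 \bmod r$) with $v_1[i..i+r-1] = v$, let $p_i$ and $q_i$ be the states of the run just before the transition reading $y[i]$ and just after the transition reading $y[i+r-1]$ respectively (well-defined because each transition reads at most one symbol from tape~$2$). Refine the factorization of $\gamma$ as
\[
\gamma = q_0 \trans{u'_i,\,y[1..i-1]\,|\,w'_i} p_i \trans{u''_i,\,v\,|\,w''_i} q_i \trans{x''_i,\,y''_i\,|\,z''_i} \limrun,
\]
and apply \lemmaref{relation} with $v_2 = v$: this forces $u''_i \ne \varepsilon$, so at least one symbol of $x$ is consumed inside the $i$-th middle segment. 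Distinct aligned positions differ by at least $r$, so the middle segments are pairwise disjoint sub-runs of $\gamma$, and summing their individual contributions to $u_1$ yields the claim.

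Next, because $y$ is normal it is simply normal to length~$r$, so $\alocc{y[1..N]}{v}/(N/r) \to |A|^{-r}$. Pick $N_0$ (depending on $y$ and $r$) such that $\alocc{y[1..N]}{v} \ge N/(2r|A|^r)$ whenever $N \ge N_0$. Applied to $N = |v_1|$ and combined with the previous step, this gives $|v_1| \le 2r|A|^r\,|u_1|$ whenever $|v_1| \ge N_0$. Setting $K = 2r|A|^r$ and interpreting ``$|u_1|$ long enough'' as $|u_1| \ge N_0$, the remaining case $|v_1| < N_0$ is immediate since then $|v_1| < N_0 \le |u_1| \le K |u_1|$. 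The constant $K$ depends only on $\mathcal{T}$ through $r$ and the alphabet size, as required.

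The delicate point is the first step: one must ensure that the refined factorizations at different aligned occurrences share no transitions, so that the one-symbol lower bound from \lemmaref{relation} can be added over all occurrences without double-counting. This relies on the convention (discussed before \lemmaref{relation}, after the $\varepsilon^\ell$-transition elimination) that every remaining transition reads a symbol from at least one of the first two tapes, which makes the states $p_i$ and $q_i$ uniquely determined by the positions consumed on tape~$2$ and guarantees that segment~$i$ ends strictly before segment~$j$ begins whenever $j \ge i+r$.
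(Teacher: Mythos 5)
Your proposal is correct and follows essentially the same route as the paper: fix a forward word $v$, use \lemmaref{relation} to charge at least one symbol of $u_1$ to each of a family of disjoint occurrences of $v$ in $v_1$, and use normality of $y$ to lower-bound the number of such occurrences linearly in $|v_1|$, yielding $K$ depending only on $\mathcal{T}$. The paper's proof is a three-line sketch of exactly this argument; your version merely makes explicit the disjointness of the charged run segments and the small-$|v_1|$ edge case.
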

\begin{proof}
  Let $v$ be a forward word for $\mathcal{T}$.  Since $y$ is normal there
  is a positive constant~$k$ less than $1$  such that the number of disjoint occurrences of~$v$ in
  $y[1..n]$ is greater than $kn$ for $n$ large enough.  By the previous
  lemma, $|u_1| \ge k|v_1|$ holds.  The result holds then with $K = 1/k$.
\end{proof}

We write $\mu$ for the Lebesgue measure.

\begin{theorem} \label{thm:measure0} 
  For each normal word~$y$, the set $\{ x : \rho(x/y) < \rho(x) \}$ has
  Lebesgue measure~$0$.
\end{theorem}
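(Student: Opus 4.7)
The plan is to reduce the theorem to showing that, for each fixed normal $y$, the set $N(y) = \{x : \rho(x/y) < 1\}$ has Lebesgue measure zero. This reduction works because $\rho(x/y) \le \rho(x) \le 1$ always: any one-to-one $1$-deterministic $2$-automaton can be simulated by a compressor that interleaves dummy $y$-consuming transitions with its ordinary transitions, so every compression ratio achievable for $\rho(x)$ is also achievable for $\rho(x/y)$. Consequently, $\rho(x/y) < \rho(x)$ forces either $\rho(x) < 1$ (hence $x$ non-normal, a measure-zero event) or $\rho(x) = 1$ and $x \in N(y)$.

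Next, I would decompose $N(y) = \bigcup_{\mathcal{C},r} N_{\mathcal{C},r}$, taking the union over the countably many compressors $\mathcal{C}$ and rationals $r \in (0,1)$, where $N_{\mathcal{C},r} = \{x : \rho_{\mathcal{C}}(x/y) < r\}$. For each pair $(\mathcal{C},r)$, I would apply a Borel--Cantelli argument. Writing $\ell(x,n)$ for the output length produced by $\mathcal{C}$ on $(x,y)$ at the instant the $n$-th symbol of $x$ has been consumed, the inequality $\rho_{\mathcal{C}}(x/y) < r$ is equivalent to $\ell(x,n) < rn$ holding for infinitely many $n$.

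The core step is a counting estimate. For fixed $y$, the map sending $x[1..n]$ to the triple $(q,w,m)$, where $q$ is the state reached, $w$ the output produced, and $m$ the number of $y$-symbols consumed, is injective on prefixes extendable to an accepting run. Indeed, two distinct prefixes leading to the same $(q,w,m)$, extended by any common infinite suffix $z$, would follow identical transitions on $(z,y[m+1..])$ from the common state $q$ and produce the same total output, contradicting the requirement that $x \mapsto \mathcal{C}(x,y)$ be one-to-one. Combining this injectivity with Lemma~\ref{lemma:lemmaK}, which provides a constant $K$ (depending on $\mathcal{C}$ and using the normality of $y$) such that $m \le Kn$ for all sufficiently large $n$, the number of prefixes $x[1..n]$ with $\ell(x,n) < rn$ is at most $|Q| \cdot (Kn+1) \cdot \frac{|A|^{\lceil rn \rceil + 1}}{|A| - 1}$, which is $O(n \cdot |A|^{rn})$. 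Dividing by $|A|^n$ gives $\mu(\{x : \ell(x,n) < rn\}) = O(n \cdot |A|^{(r-1)n})$; this is summable in $n$ because $r < 1$, and Borel--Cantelli yields $\mu(N_{\mathcal{C},r}) = 0$. Taking the countable union completes the proof.

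The main obstacle is the finite-prefix uniqueness step: the one-to-one property of a compressor is stated only for infinite inputs, so it must be pushed down to prefixes while simultaneously indexing by state, output, and the amount of $y$ consumed. Without Lemma~\ref{lemma:lemmaK} to bound $m$ linearly in $n$ the count would diverge, which is precisely the point where the normality hypothesis on $y$ is essential.
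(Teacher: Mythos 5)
Your proposal is correct and follows essentially the same route as the paper: reduce to showing $\{x : \rho_{\mathcal{C}}(x/y)<1\}$ is null for each of the countably many compressors, use the one-to-one property to make the map from $x[1..n]$ to the configuration (state, output, number of $y$-symbols consumed) injective, invoke Lemma~\ref{lemma:lemmaK} (normality of $y$) to bound the $y$-consumption by $Kn$, and conclude by a Borel--Cantelli count of configurations. The only cosmetic difference is that you take a union over rational thresholds $r<1$ where the paper uses a single shrinking threshold $\varepsilon(n)=1/\sqrt{n}$; both yield the same summable bound.
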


\begin{proof}
  Fix $y$ normal. Since for every $x$, $\rho(x) \le 1$  (see comment after Definition \ref{def:rho}), it suffices to
  prove $\{ x : \rho(x/y) < 1 \}$ has measure~$0$.  The inequality
  $\rho(x/y) < 1$ holds if there exists a $2$-deterministic $3$-automaton
  that compresses $x$ using $y$ as oracle.  For a $2$-deterministic
  $3$-automaton~$\mathcal{T}$, let $Q$ be its state set and let $K$ be the
  constant obtained by Lemma~\ref{lemma:lemmaK}.  For any integer~$n$ and
  any positive real number $\varepsilon < 1$, let $X_{n,\varepsilon}$ be
  defined by
  \begin{displaymath}
    X_{n,\varepsilon} = \{ x : q_0\trans{u,v|w} q, u = x[1..n], v = y[1..n']
                                \text{ and } |w| < (1-\varepsilon)n \}.
  \end{displaymath}  
  We claim that $\measure(X_{n,\varepsilon}) \le |Q|Kn2^{-n\varepsilon}$.
  The number of configurations $\tuple{q,n,n'}$ is at most
  $|Q|Kn2^{(1-\varepsilon)n}$ because there are $|Q|$ possibles states,
  $n'\leq Kn$, and there are at most $2^{(1-\varepsilon)n}$ words of length
  smaller than $(1-\varepsilon)n$.  Two words $x$ and $x'$ with different
  prefixes of length~$n$ can not reach the same configuration. If they did,
  the rest of the run would be the same and this would contradict the
  injectivity of~$\mathcal{T}$.  Therefore, $X_{n,\varepsilon}$ is
  contained in at most $|Q|Kn2^{(1-\varepsilon)n}$ cylinders of measure~$2^{-n}$.

  Observe that for $n$ fixed, the inclusion
  $X_{n,\varepsilon} \subseteq X_{n,\varepsilon'}$ holds for
  $\varepsilon' \leq \varepsilon$.  Let $\varepsilon(n)$ be a decreasing
  function which maps each integer~$n$ to a real number such that
\[\sum_{n\ge0}{n2^{-\varepsilon(n)n}} < \infty,
\] 
  for instance  $\varepsilon(n) = 1/\sqrt n$.  For each $k$, define
  \begin{displaymath}
    \tilde X_k = \bigcup_{n \ge k} {X_{n,\varepsilon(n)}}.
  \end{displaymath}
  By the choice of the function $\varepsilon(n)$,
  $\lim_{k \to \infty}{\measure(\tilde X_k)} = 0$.  We claim that each
  word~$x$ compressible with $\mathcal{T}$ with normal oracle $y$ belongs
  to $\tilde X_k$ for each $k$.  It suffices to prove that it belongs to
  infinitely many sets $X_{n,\varepsilon(n)}$.  Suppose $x$ is compressed
  with ratio $1-\delta$, for some $\delta>0$.  Then there is an infinite
  sequence of integers $(n_j)_{j\ge0}$ such that the configurations
  $\tuple{q,n_j,n'_j}$ reached after reading the prefix of length~$n_j$ and
  outputting $w_j$, with $|w_j| < (1-\delta)n_j$.  Let $j_0$ be such that
  $\varepsilon(n_{j_0})<\delta$.  Then, for each for $j\geq j_0$, we have
  $x\in X_{n_j,\varepsilon(n_j)}$.  Since this holds for each of the
  countably many $3$-automata~$\mathcal{T}$, we conclude that the measure
  of the set of words compressible with normal oracle $y$ is null.
\end{proof}

\begin{proof}[Proof of Theorem~\ref{thm:measureI}]
  By definition of independence, the complement
  $\bar{I} = A^\omega \times A^\omega \setminus I$ of~$I$ can be decomposed
  as $\bar{I} = J_1 \cup J_2 \cup J_3 \cup J_4$ where the sets $J_1$,
  $J_2$, $J_3$ and $J_4$ are defined by the following equations.
  \begin{alignat*}{2}
    J_1 & = \{ (x,y) : \rho(x) = 0 \} 
        & \qquad\qquad
    J_2 & = \{ (x,y) : \rho(y) = 0 \} \\
    J_3 & = \{ (x,y) : \rho(x/y) < \rho(x) \} 
        & \qquad\qquad
    J_4 & = \{ (x,y) : \rho(y/x) < \rho(y) \} 
  \end{alignat*}
  The sets $J_1$ and~$J_2$ satisfy $\mu(J_1) = \mu(J_2) = 0$.  By symmetry,
  the sets $J_3$ and~$J_4$ satisfy $\mu(J_3) = \mu(J_4)$.  To show 
  that $\mu(I) = 1$, it suffices then to show that $\mu(J_3) = 0$.

  The measure of~$J_3$ is then given by
  \begin{displaymath}
    \mu(J_3) = \int\!\!\!\int_{x,y} 1_{J_3}\, dxdy = \int_y \left(\int_x
      1_{J_3}\,dx \right) dy = \int_y f(y)\,dy
  \end{displaymath}
  where the function~$f$ is defined by $f(y) = \mu(\{x: \rho(x/y) < \rho(x) \})$.
  By Theorem~\ref{thm:measure0}, $f(y)$ is equal to~$0$ for each normal
  word~$y$.  Since the set of normal words has measure~$1$, $f(y)$ is equal
  to~$0$ for almost all~$y$.  It follows that $\mu(J_3) = 0$.  This
  concludes the proof of the theorem.
\end{proof}
\bigskip
\bigskip
\bigskip

\noindent
\textbf{Acknowledgements.}
The authors acknowledge Alexander Shen for many fruitful discussions.
The authors are members of the Laboratoire International
Associ\'e INFINIS, CONICET/Universidad de Buenos Aires--CNRS/Universit\'e
Paris Diderot.  Becher is supported by the University of Buenos Aires and
CONICET.

\bibliographystyle{plain}
\bibliography{independence}
\bigskip

{\setstretch{0.9}
\begin{minipage}{\textwidth}
\noindent
Ver\'onica Becher
\\
Departamento de  Computaci\'on, Facultad de Ciencias Exactas y Naturales \& ICC\\
 Universidad de Buenos Aires\&  CONICET\\
 Argentina.
\\
vbecher@dc.uba.ar
\medskip\\
Olivier Carton
\\
Institut de Recherche en Informatique Fondamentale,
Universit\'e Paris Diderot
\\
Olivier.Carton@irif.fr
\medskip\\
Pablo Ariel Heiber
\\
 Departamento de  Computaci\'on,   Facultad de Ciencias Exactas y Naturales
\\
Universidad de Buenos Aires \& CONICET, Argentina.
\\
pheiber@dc.uba.ar
\end{minipage}
}
\end{document}